\documentclass[twoside,leqno,twocolumn]{article}

\usepackage[letterpaper]{geometry}

\usepackage{siamproceedings}

\usepackage[T1]{fontenc}
\usepackage{amsfonts}
\usepackage{graphicx}
\usepackage{epstopdf}
\usepackage{enumitem}
\usepackage{algorithmic}

\usepackage{booktabs}
\usepackage{multirow}
\usepackage{tabularx}
\usepackage{threeparttable}
\usepackage{subcaption}
\usepackage{amsmath}
\usepackage{array}
\usepackage{xcolor}
\usepackage{amsmath}
\usepackage{xltabular}
\usepackage{booktabs}
\usepackage[dvipsnames]{xcolor}
\usepackage{hyperref}

\hypersetup{
  pdftitle={Spectral Analysis of Fake News Propagation},
  pdfauthor={Weibin Cai and Reza Zafarani},
  pdfkeywords={fake news propagation, spectral graph theory, spectral analysis}
}

\ifpdf
  \DeclareGraphicsExtensions{.eps,.pdf,.png,.jpg}
\else
  \DeclareGraphicsExtensions{.eps}
\fi

\newsiamremark{remark}{Remark}
\newsiamremark{hypothesis}{Hypothesis}
\crefname{hypothesis}{Hypothesis}{Hypotheses}
\newsiamthm{claim}{Claim}

\usepackage{amsopn}

\begin{document}

\newcommand\relatedversion{}

\title{\Large Spectral Analysis of Fake News Propagation\relatedversion}

\author{
Weibin Cai\thanks{Data Lab, Department of EECS, Syracuse University, New York, USA. Emails: \email{weibin44@data.syr.edu}, \email{reza@data.syr.edu}.}
\and Reza Zafarani\footnotemark[1]
}


\date{}

\maketitle


\fancyfoot[R]{\scriptsize{Copyright \textcopyright\ 2026 by SIAM\\
Unauthorized reproduction of this article is prohibited}}





\begin{abstract} 
How can we systematically represent the propagation of information? The propagation structure of fake news has been shown to be an important cue for detecting it; yet, existing propagation-based fake news detection methods have mainly relied on ad hoc topological features, and a unified view of cascade patterns is still lacking. To address this, we study news propagation from a spectral view by connecting graph spectra to propagation-related structural properties through rigorous spectral bounds. We introduce several new bounds and integrate them with existing bounds into a unified spectral representation of information propagation. We then use these spectral bounds for downstream classification and design a discrete structural optimization framework to interpret learned propagation patterns. For efficient optimization, we rely on a first-order perturbation approximation and consider both score-guided and bound-guided objectives. Experiments on real-world data reveal meaningful spectral differences between fake and real news, competitive classification performance, and interpretable evolution trajectories from structural optimization. The findings demonstrate the value of spectral analysis for understanding and modeling information propagation.

\end{abstract}

\section{Introduction.}

Fake news detection is an important problem because false content can spread rapidly on social media and cause harmful social impact~\cite{lazer2018science,bakdash2018future,vosoughi2018spread}. A natural way to detect fake news is to examine its content or source. Existing early fake news detection methods mainly fall into three categories: \textit{knowledge-based} approaches, which verify claims against external knowledge bases~\cite{shi2016discriminative,ciampaglia2015computational}; \textit{style-based} approaches, which exploit linguistic and stylistic cues in news text~\cite{zhou2019fake,zhou2020survey}; and \textit{source-based} approaches, which assess the credibility of the entities involved in news creation, publication, or dissemination~\cite{horne2018assessing,norregaard2019nela}. However, these methods all have inherent limitations. Knowledge-based methods depend on reliable external knowledge, which is often incomplete or unavailable in practice~\cite{shu2017fake}. Style-based methods are vulnerable to intentional style manipulation~\cite{zhou2020survey}, a problem that has become more severe with the rise of generative language models~\cite{kumar2025peeping}. Source-based methods provide only indirect signals, since source credibility does not necessarily imply the veracity of a specific news item. As a result, relying on source credibility alone may lead to overly arbitrary judgments~\cite{zhou2020survey}. These methods provide useful signals from the perspectives of \textit{what the news says} and \textit{who is involved in it}, but they do not capture the structural characteristics and diffusion signals of \textit{how the news propagates} on social networks. Prior \textit{propagation-based} studies have shown that fake and real news exhibit different propagation patterns on social networks. For example, fake news tends to spread faster and deeper, and attracts more strongly connected spreaders~\cite{vosoughi2018spread,zhou2019network}. These findings suggest that propagation structure provides complementary and robust signals for fake news detection, and has inspired a growing body of structural and graph-based methods~\cite{ma2018rumor,bian2020rumor,zhou2019network,shu2020hierarchical,wu2015false}.

\begin{figure}[t]
    \centering
    \includegraphics[width=\linewidth]{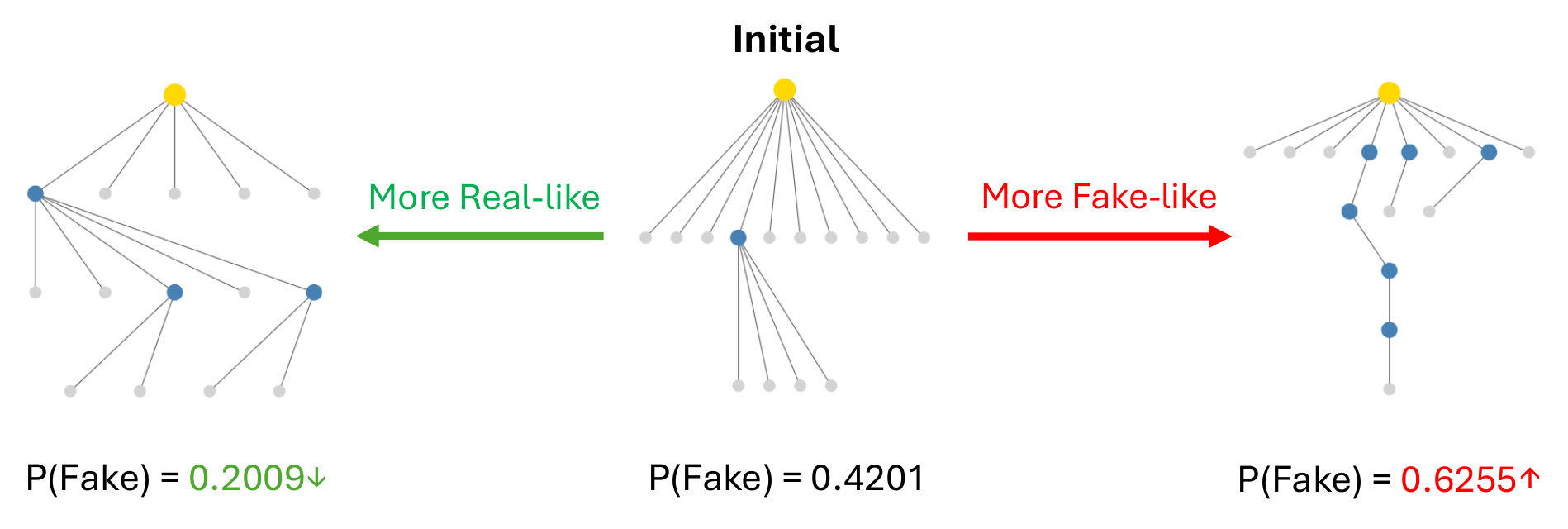}\vspace{-2mm}
    \caption{\small The benefits of spectral analysis of news. Illustration of the designed structural optimization from the same initial news propagation tree (nodes are spreaders and edges are spreading patterns). Optimizing (after some steps) toward a more ``\textcolor{red}{fake}-like" pattern (higher fake probability) produces a \textcolor{red}{deeper} and more \textcolor{red}{imbalanced} structure, while optimizing toward a more ``\textcolor{ForestGreen}{real}-like" pattern yields a \textcolor{ForestGreen}{shallower} and more \textcolor{ForestGreen}{balanced} propagation tree.}\vspace{0mm}    
    \label{fig:intro_example}
\end{figure}

Existing propagation-based methods can be broadly divided into two lines of work. The first line focuses on \textit{neural representation learning} for propagation structures, where graph neural networks~\cite{kipf2016semi} and related techniques~\cite{you2020graph} encode news propagation graphs into latent representations~\cite{ma2018rumor,bian2020rumor,wei2022uncertainty,cui2024propagation}. Although these methods often achieve strong classification performance, the learned representations are difficult to interpret, since individual dimensions usually lack clear structural meaning and the latent space remains hard to analyze directly. The second line uses \textit{handcrafted topological features} to represent propagation graphs~\cite{vosoughi2018spread,zhou2019network,shu2020hierarchical,kwon2013prominent}. Prior studies have shown that fake news exhibits distinct propagation behaviors from real news through features such as depth, cascade size, and structural virality~\cite{vosoughi2018spread}. While these handcrafted features provide intuitive insights, they are largely ad hoc and capture only specific aspects of cascade structure rather than providing a unified view of all propagation patterns.

To better understand propagation mechanisms, it is desirable to develop a unified representation of information cascades that can captures global structural patterns, their variations, and help distinguish between different types of cascades (e.g., fake or truth). Spectral graph theory provides a natural path for analyzing propagation graphs. While two cascades may appear similar in size or some topological features, such as depth and structural virality, their underlying ``spectral fingerprints'', which are derived from the eigenvalues of matrices such as the adjacency or Laplacian often reveal fundamental differences in how the information was propagated (see Figure \ref{fig:spectral-example}). These spectra capture global structural properties such as degree distribution, clustering coefficient, connectivity, and the like~\cite{jin2020spectral}. Moreover, many cascade characteristics, e.g., breadth, diameter~\cite{guo2025laplacian}, can be linked to eigenvalues through well-established bounds, offering a unified and interpretable representation of propagation structures.\vspace{1mm}

\noindent \textbf{Present work: Spectral Analysis of Fake News.} Motivated by these observations, we study news propagation from a spectral perspective. The key idea is to connect graph spectra with propagation-related structural properties through spectral bounds, and use this connection to analyze how fake and real news differ in their propagation patterns. Under this perspective, we develop two components for propagation analysis. \underline{(1) Spectral Representation}: We represent propagation patterns through five general aspects that describe their structure: branching capacity, cascade scale, structural cohesion, propagation span, and diffusion dynamics; each aspect encompasses various cascade properties, which are in turn linked to numerous spectral bounds. Together, these bounds form a unified structural representation of propagation patterns. We further analyze their tightness and correlation with the corresponding graph properties, and test their utility by using them in downstream classification. \underline{(2) Structural Optimization}: We further develop a discrete structural optimization framework to study how propagation patterns evolve under different scenarios, e.g., fake or real. This optimization process, using a first-order perturbation approximation, allows us to trace how propagation structures evolve toward a target direction. The result provides insight into the propagation patterns associated with fake and real news. As illustrated in Figure~\ref{fig:intro_example}, the same initial propagation tree can evolve toward more fake-like or more real-like structures under our optimization framework, revealing clear structural differences between the two. In summary, our contributions are as follows:

\begin{itemize}
    \item \textbf{Spectral analysis of information propagation graphs.} We develop a spectral view of news propagation by connecting graph spectra with propagation-related structural properties through spectral bounds. In particular, we prove various new bounds and unify them with existing results to form a propagation-oriented spectral representation. Based on this view, we analyze the differences between fake and real news in their spectra, and study the tightness and correlation of spectral bounds with the corresponding graph properties. 
    \vspace{-1mm}
    \item \textbf{Spectral representation of propagation graphs.} We use spectral bounds as a unified structural representation of propagation graphs. To the best of our knowledge, this is the first work to represent news propagation graphs from a spectral perspective. We further validate that these bound-based representations provide informative structural signals for fake news classification.
    \vspace{-1mm}
    \item \textbf{Structural optimization of propagation patterns.} We propose a discrete structural optimization framework to interpret propagation patterns in the space of graphs. Through this optimization, we obtain interpretable structural evolution trajectories and gain deeper insight into the differences between fake and real news propagation.
    \vspace{-1mm}

\end{itemize}
\section{Related Work.} As we focus on the propagation structure of fake news, we concentrate on representations that model how news spreads on social networks.


\subsection{Neural Representation Learning.}
Current propagation-based methods can be broadly divided into two types: (1) \textit{homogeneous propagation} models news diffusion using graphs whose nodes are posts and whose edges represent replies or retweets. Representative models such as RvNN~\cite{ma2018rumor} and BiGCN~\cite{bian2020rumor} build such propagation graphs and use RNNs~\cite{cho2014properties} or GNNs~\cite{kipf2016semi} to learn post representations. Some studies further incorporate temporal information, such as posting time, to capture finer-grained diffusion dynamics~\cite{sun2022ddgcn,peng2024rumor}. Since social media responses may be noisy or intentionally misleading, recent work has also improved robustness through graph contrastive learning~\cite{he2021rumor,ma2022towards,sun2022rumor,cui2024propagation,zhang2024evolving} and adversarial training~\cite{song2021adversary,sun2022rumor}. In contrast, (2) \textit{heterogeneous propagation} incorporates richer social context by allowing multiple node and edge types, such as news, posts, and users, together with relations such as posting, replying, retweeting, or following~\cite{Monti2019FakeND,shu2020hierarchical}. HPFN~\cite{shu2020hierarchical} investigates propagation features from structural, temporal, and linguistic perspectives between fake and real news through macro-level (where network includes news nodes, tweet nodes, and retweet nodes) and micro-level (where networks are conversation trees represented by cascade of reply nodes) networks. The authors further incorporate GCNFN~\cite{Monti2019FakeND} into this framework to examine the contribution of different feature types.

\subsection{Handcrafted Topological Features.} This line of work uses manually designed topological features to characterize propagation cascades. For example, Vosoughi et al.~\cite{vosoughi2018spread} measure static features such as depth, cascade size, maximum breadth, and structural virality~\cite{goel2016structural}, together with their temporal evolution, and show that fake news tends to spread deeper, faster, and more broadly than real news. Shu et al.~\cite{shu2020hierarchical} extract the following four topological features in both  macro and micro networks: (1) depth; (2) cascade size; (3) maximum outdegree; and (4) depth of the node with the maximum outdegree in macro-level network. They use $t$-tests to show whether differences between fake news and real news are significant. Zhou et al.~\cite{zhou2019network} categorize propagation-related features into four patterns: (1) more-spreader; (2) farther-distance; (3) stronger-engagement; and (4) dense-networks. Notably, these features are derived from the friendship network of news spreaders rather than the news propagation structure itself. Their analysis shows that fake news tends to spread farther and involve users who are more densely connected and more highly engaged.

\section{Preliminaries and Notation.}
For an undirected graph $G=(V, E)$ with vertices $V=\{v_1, v_2, \cdots,v_n\}$ and edges $E \subseteq V \times V$, its adjacency matrix $A \in \mathbb{R}^{n \times n}$ has $A_{ij} = 1$ if $(i, j) \in E$, otherwise 0. The degree matrix $D \in \mathbb{R}^{n \times n}$ is a diagonal matrix with node degrees on its diagonal, i.e., $D_{ii} = \sum_{j=0}^{n} A_{ij} $. Under the news propagation scenario, each node represents a comment or retweet. Consequently, the propagation graph contains no cycles and is therefore a tree. Although the graph structure is simplified from a general graph to a tree, many graph properties (e.g., diameter) remain meaningful, particularly when the graph is represented using its spectrum. The spectrum of a matrix is the set of its eigenvalues, which implicitly reveals its propagation structure properties. A graph can be represented by different matrices constructed from the adjacency matrix $A$ and degree matrix $D$, such as the transition matrix, and normalized Laplacian; their spectra capture distinct structural characteristics of the graph. Here, we focus on three matrices: (1) adjacency matrix $A$ with spectrum $\lambda_1 \ge \lambda_2 \ge \cdots \ge \lambda_n$; (2) laplacian matrix $L = D - A$ with spectrum $\mu_1 \ge \mu_2 \ge \cdots \ge \mu_n = 0$; (3) normalized laplacian matrix $\mathcal{L} = I - D^{-\frac{1}{2}} A D^{-\frac{1}{2}}$ with spectrum $2 \ge \nu_1 \ge \nu_2 \ge \cdots \ge \nu_n = 0$.


\section{Spectral Representation of Propagation Graphs.}
\label{sec:encoding}

Existing topological features provide intuitive descriptions of propagation trees, but they are often introduced in an ad hoc manner and do not offer a unified view of cascade structure. As a result, they may fail to capture more global aspects of propagation, such as structural robustness and diffusion-related behavior. In contrast, spectral graph theory provides a common language for characterizing graph structure through eigenvalues and eigenvectors. Even when two cascades appear similar under standard topological statistics, their ``spectral fingerprints'' can still reveal important differences in how information propagates, as illustrated in Appendix~\ref{app:spectral-example} (see Fig. \ref{fig:spectral-example}). Motivated by this perspective, we study propagation trees through spectral representation. Directly using the full spectrum, however, is inconvenient because the number of eigenvalues varies with graph size. We therefore organize the spectral representation of news propagation into five complementary categories, each corresponding to one aspect of propagation behavior, such as branching capacity, structural cohesion, and diffusion dynamics. For each category, we use spectral bounds to connect graph spectra with propagation-related structural properties. These bounds include both classical results from spectral graph theory and several new bounds proved in this work. Together, they provide a propagation-oriented and interpretable representation of propagation trees. We next introduce these five categories and their representative spectral bounds to provide an intuitive understanding of the proposed spectral representation.\vspace{1mm}

\paragraph{C1. Branching Capacity} reflects how broadly a piece of news can expand at the local level, and prior studies have shown that fake and real news differ noticeably in this aspect~\cite{vosoughi2018spread,zhou2019network}. From a spectral perspective, branching capacity is naturally related to the spectral radius \(\lambda_1\), which is closely connected to degree concentration through bounds on the maximum and mean degree of a graph. Based on this connection, we further derive \(\lambda_1\)-based bounds for mean branching, maximum branching at layer \(k\), and degree entropy. The proofs are provided in Appendix~\ref{app:bounds_details}:

\begin{proposition}[Mean Branching Bound]
\label{prop:branching-factor}
Let $T$ be a rooted tree with $n$ nodes and $|I(T)|$ internal nodes. Then, for mean branching $\bar b$, 
\[
\bar b \le \frac{n\lambda_1}{2|I(T)|}.
\]
\end{proposition}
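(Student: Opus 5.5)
The plan is to express the mean branching $\bar b$ through the number of edges and then bound the edge count by the spectral radius. I take $\bar b$ to be the average number of children over the internal nodes of the rooted tree $T$,
\[
\bar b = \frac{1}{|I(T)|}\sum_{v\in I(T)} c(v),
\]
where $c(v)$ is the number of children of $v$. Every non-root node has exactly one parent. Leaves have no children. Hence $\sum_{v\in I(T)} c(v)=\sum_{v\in V} c(v) = n-1 = |E|$, and so
\[
\bar b = \frac{n-1}{|I(T)|}.
\]

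The second step is the classical inequality that the spectral radius dominates the average degree, $\lambda_1 \ge \bar d = 2|E|/n$. I would prove it with the Rayleigh quotient of the symmetric matrix $A$ at the test vector $\mathbf{1}$:
\[
\lambda_1 = \max_{x\neq 0}\frac{x^\top A x}{x^\top x} \ge \frac{\mathbf{1}^\top A\mathbf{1}}{n} = \frac{2|E|}{n} = \frac{2(n-1)}{n}.
\]
This gives $n-1 \le n\lambda_1/2$.

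Combining the two steps yields
\[
\bar b = \frac{n-1}{|I(T)|} \le \frac{n\lambda_1}{2|I(T)|},
\]
which is the claimed bound. In the degenerate case $n=1$ there are no internal nodes, so the statement should be read for $n\ge 2$.

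I do not expect a real obstacle here. The only point needing care is fixing the definition of mean branching. If the paper instead defines it via the degrees of internal nodes rather than the numbers of children, I would adjust the count. The sum of degrees over internal nodes equals $2(n-1)$ minus the number of leaves, and the root's lack of a parent must be accounted for. The resulting quantity is still at most $2|E| = \mathbf{1}^\top A\mathbf{1} \le n\lambda_1$, but it picks up an extra factor. So I would verify that the children-count definition is the one consistent with the factor $\tfrac12$ in the statement.
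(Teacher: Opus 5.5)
Your proposal is correct and matches the paper's proof: it also uses the children-count definition to get $\bar b=(n-1)/|I(T)|=\frac{n}{2|I(T)|}\bar d$, then applies $\bar d\le\lambda_1$. The only difference is that the paper cites $\bar d\le\lambda_1$, whereas you derive it from the Rayleigh quotient at $\mathbf{1}$.
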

See Appendix~\ref{app:proof_mean_branch} for the proof.

\begin{proposition}[Maximum Branching at Layer $k$ Bound]
\label{prop:max-breadth}
Let $T$ be a rooted tree, and $b_k$ be the number of nodes at level $k$. Then,
\[
b_k \le \lambda_1^2(\lambda_1^2-1)^{k-1}, \qquad k\ge 1.
\]
\end{proposition}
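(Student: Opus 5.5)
We bound $b_k$ through the maximum degree $\Delta$ of $T$ and then use the classical inequality $\lambda_1 \ge \sqrt{\Delta}$. The inequality holds because the star $K_{1,\Delta}$ centered at a maximum-degree vertex is a subgraph of $T$, and its spectral radius is $\sqrt{\Delta}$. By Perron--Frobenius monotonicity of the spectral radius under taking subgraphs (or by evaluating the Rayleigh quotient of $A$ at the star's Perron vector), we get $\Delta \le \lambda_1^2$.

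Next comes a layer-counting argument. The root $r$ has at most $\Delta$ children, so $b_1 \le \Delta$. Every node at level $j \ge 1$ has one edge to its parent, so it has at most $\Delta - 1$ children. By induction on $k$,
\[
b_k \le \Delta(\Delta-1)^{k-1}, \qquad k \ge 1.
\]
The function $x \mapsto x(x-1)^{k-1}$ is nondecreasing for $x \ge 1$. Substituting $\Delta \le \lambda_1^2$ therefore gives $b_k \le \lambda_1^2(\lambda_1^2-1)^{k-1}$.

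The only delicate point is this substitution. We need $\Delta \ge 1$ so that both $\Delta$ and $\lambda_1^2$ lie in the region where the map is monotone; this holds whenever $T$ has at least one edge. If $T$ is a single node, then $b_k = 0$ for all $k \ge 1$ and the bound holds trivially. One might worry that $\lambda_1^2 - 1$ could be negative, but $\lambda_1^2 \ge \Delta \ge 1$ rules this out. No step seems genuinely hard: the main work is citing or quickly proving $\lambda_1 \ge \sqrt{\Delta}$, via a test vector $x$ with $x_v = \sqrt{\Delta}$ at the center $v$, $x_u = 1$ on its neighbours, and $0$ elsewhere. This vector gives $x^\top A x / x^\top x = 2\Delta\sqrt{\Delta}/(2\Delta) = \sqrt{\Delta}$.
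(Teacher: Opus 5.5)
Your proof is correct and takes the same route as the paper: a layer-counting bound $b_k \le \Delta(\Delta-1)^{k-1}$, followed by the substitution $\Delta \le \lambda_1^2$. The differences are small. You prove $\Delta\le\lambda_1^2$ directly with the star test vector, where the paper cites it as a lemma, and you justify the substitution via monotonicity of $x\mapsto x(x-1)^{k-1}$ on $[1,\infty)$, a step the paper leaves implicit.
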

See Appendix~\ref{app:proof_max_branch} for the proof.

\begin{proposition}[Degree Entropy Bound]
\label{prop:degree-entropy}
Let \(G\) be a graph with adjacency matrix \(A\), spectral radius \(\lambda_1\), and
degree distribution \(\{q_j\}\), where \(q_j\) is the fraction of nodes of degree
\(j\). Define the degree entropy by
\[
H_d = -\sum_j q_j \log q_j.\vspace{-3mm}
\]
Then
\[
H_d \le \log(\lambda_1^2+1).
\]
\end{proposition}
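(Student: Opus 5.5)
The bound follows by combining two elementary facts: the entropy of a distribution is at most the logarithm of its support size, and the maximum degree is controlled by the spectral radius. Concretely, $\{q_j\}$ is a probability distribution on the set of degree values that actually occur in $G$. The possible degrees lie in $\{0,1,\dots,\Delta\}$, where $\Delta$ is the maximum degree, so at most $\Delta+1$ of the $q_j$ are nonzero. By Jensen's inequality (equivalently, nonnegativity of the Kullback--Leibler divergence to the uniform distribution on the support), $H_d \le \log(\#\{j: q_j>0\}) \le \log(\Delta+1)$.

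It then suffices to show $\Delta \le \lambda_1^2$, since $\log$ is increasing. This is the classical star-subgraph bound. Let $v$ be a vertex of degree $\Delta$. Use the Rayleigh quotient $\lambda_1 = \max_{x\neq 0} x^\top A x / x^\top x$, valid since $A$ is symmetric. Take the test vector $x$ with $x_v=\sqrt{\Delta}$, $x_u=1$ for each neighbor $u$ of $v$, and $0$ elsewhere. Then
\[
x^\top x = 2\Delta,
\]
and, because all entries of $x$ and $A$ are nonnegative, counting only the edges incident to $v$ gives
\[
x^\top A x \ge 2\Delta\sqrt{\Delta}.
\]
Hence $\lambda_1 \ge \sqrt{\Delta}$, i.e.\ $\Delta\le\lambda_1^2$. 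Alternatively, one can invoke the standard fact that $\lambda_1$ of a graph is at least $\lambda_1$ of any subgraph, together with $\lambda_1(K_{1,\Delta})=\sqrt{\Delta}$, via Perron--Frobenius monotonicity.

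Combining the two steps gives $H_d\le\log(\Delta+1)\le\log(\lambda_1^2+1)$. No step is a real obstacle. The only care needed is in the support count: degree $0$ must be included in the range, which yields $\Delta+1$ rather than $\Delta$ possible values. For trees with $n\ge2$ there are no isolated vertices, so $\log\Delta$ would hold, but the stated bound with $+1$ is valid for general graphs. The degenerate case $\Delta=0$ also works, since then $H_d=0=\log 1$.
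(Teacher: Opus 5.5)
Your proof is correct and follows the same route as the paper: bound $H_d$ by the logarithm of the number of occurring degree values, which is at most $\Delta+1$, and then apply $\Delta\le\lambda_1^2$. The one difference is that the paper cites $\Delta\le\lambda_1^2$ as a lemma stated only for trees. Your star test vector in the Rayleigh quotient proves it for arbitrary graphs, which is the generality in which the proposition is actually stated.
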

See Appendix~\ref{app:proof_degree_entropy} for the proof.\vspace{1mm}

\paragraph{C2. Cascade Scale} describes the overall size of information diffusion, including how many users and interactions are involved during propagation~\cite{vosoughi2018spread,shu2020hierarchical}. In spectral terms, graph size is naturally reflected in the Laplacian spectrum, since \(\sum_i \mu_i = \mathrm{trace}(L) = 2(n-1)\). This connection leads to several size-related bounds, such as \(\mu_1 \le n\) and \(\mu_2 \le \left\lfloor \frac{n}{2} \right\rfloor\)~\cite{zhang2011laplacian}.\vspace{1mm}

\paragraph{C3. Structural Cohesion} focuses on how robust and well connected a propagation structure is. Prior studies suggest that fake news is often associated with \textit{echo chambers}, where densely connected clusters are only weakly linked to the rest of the network~\cite{zhou2019network}. A key spectral quantity here is the algebraic connectivity \(\mu_{n-1}\). When \(\mu_{n-1}\) is small, the graph is easier to separate, which indicates weaker cohesion. This motivates the use of quantities such as vertex connectivity~\cite{brouwer2011spectra}, which reflects structural robustness, and the Cheeger constant~\cite{chung1997spectral}, which captures bottlenecks and global constraints in the propagation graph.\vspace{1mm}

\paragraph{C4. Propagation Span} characterize how far information propagates and how spread out the cascade becomes. Empirical studies suggest that fake news tends to reach deeper and farther parts of the network~\cite{vosoughi2018spread,friggeri2014rumor}. From a spectral perspective, this aspect is closely related to the algebraic connectivity \(\mu_{n-1}\). A larger \(\mu_{n-1}\) usually indicates a more cohesive structure, and therefore shorter distances between nodes; conversely, a smaller \(\mu_{n-1}\) is often associated with larger propagation span. Based on this connection, we consider spectral bounds for global distance-related quantities, including the diameter~\cite{alon1985lambda1}, and structural virality~\cite{sivasubramanian2009average}.\vspace{1mm}

\paragraph{C5. Diffusion Dynamics} unlike the previous categories, which describe static structural properties of propagation trees, diffusion dynamics concerns how efficiently information flows through the structure. This aspect is naturally linked to the normalized Laplacian spectrum, especially \(\nu_{n-1}\), which reflects how quickly diffusion mixes across the graph. Larger values of \(\nu_{n-1}\) indicate faster mixing and more efficient information spread, while smaller values suggest stronger diffusion bottlenecks. We therefore use quantities such as random walk convergence and routing time to characterize the dynamic behavior of news propagation~\cite{chung1997spectral}. \vspace{1mm}

Table~\ref{tab:spectral_bounds} summarizes all bounds, for a total of 35, organized into five categories of propagation structure.

\section{Structural Optimization of Propagation Trees.}
\label{sec:structural_decode}

After encoding propagation graphs with spectral bounds and training a classifier, we can further study news propagation patterns through structural optimization. The main idea is to modify an input propagation graph in a controlled direction with the help of the trained classifier. By tracing this structural evolution, we aim to understand how fake and real news differ in their propagation structures.

To this end, we first propose a unified discrete structural optimization algorithm. Within the algorithm, we consider two objectives. (1) \textbf{score-guided objective}: we maximize or minimize the predicted probability of being fake for a given graph, to observe how the structure changes when the graph becomes more fake-like or more real-like. (2) \textbf{Bound-guided objective}: we enforce a monotonic increase or decrease in a specific spectral bound (i.e., we change the graph structure), and then examine how the predicted fake probability changes along this process.

\subsection{Discrete Structural Optimization.}
To obtain meaningful evolution paths for propagation graphs, we need to control extraneous variables, such as the number of nodes, while also keeping each modification sufficiently fine-grained. We therefore adopt the following principles: 
\begin{enumerate}
    \item \textbf{Size control:} We fix the number of nodes so that the observed changes are purely structural, which avoids size-related bias in the classifier's prediction. 
    
    \item \textbf{Structural atomicity:} We define leaf node migration as the minimal unit of structural change. This yields a fine-grained evolution trajectory and supports first-order approximation. Moreover, any size-preserving structural change can be viewed as a sequence of leaf node migrations.
\end{enumerate} 
Based on these two principles, we present the abstract pipeline of our discrete structural optimization in Algorithm~\ref{alg:discrete_struc_optim}, where different objective functions can be specified in line 1. We then consider two objectives: (1) optimizing the current fake news prediction score in Section~\ref{sec:score-guided}; and (2) optimizing a specific propagation structure in Section~\ref{sec:bound-guided}.

\begin{algorithm}[tb]
\small
\caption{Discrete Structural Optimization}
\label{alg:discrete_struc_optim}
\begin{algorithmic}[1] 

\REQUIRE Initial tree $G_{0}$, Iterations $T$
\ENSURE Evolved tree sequence $\mathcal{H}=\{G_0, G_1, \dots, G_k\}$

\STATE Define objective function: $\Phi(G)$
\STATE Initialize: $\mathcal{H} := \{G_0\}$

\FOR{$t=0$ \TO $T$}
    \STATE $\mathcal{N}(G_{t}) := \{ \text{all possible leaf migrations on } G_{t} \}$
    
    \STATE $\delta^* := \operatorname{argmax}_{\delta \in \mathcal{N}(G_{t})} \Phi(G_t \oplus \delta)$
    
    \STATE $G_{t+1} := G_t \oplus \delta^*$
    \STATE $\mathcal{H} := \mathcal{H} \cup \{G_{t+1}\}$
\ENDFOR

\RETURN $\mathcal{H}$

\end{algorithmic}
\end{algorithm}

\subsection{Score-guided Optimization.}
\label{sec:score-guided}

We aim to decode the knowledge learned by the classifier into discrete graph topology. More specifically, given a fixed number of nodes, we ask what kind of propagation structure the classifier considers more characteristic of fake news. The full pseudo code is shown in Algorithm~\ref{alg:classifier_optim}. In line 1, we use the prediction score as the objective function. We use $d$ to control the direction of evolution, where $d=1$ ($d=-1$) drives the graph toward a more fake-like (real-like) structure. During the evolution process (lines 4--20), we first enumerate all possible leaf node migrations $\mathcal{P}$ (lines 5--13). Each candidate migration consists of a leaf node $v$, its current parent $p_{\mathrm{old}}$, and a new parent $p_{\mathrm{new}}$. For each candidate, we input the resulting graph into the classifier, compute the prediction score, and select the migration that gives the best objective value as the action at step $t$ (lines 8--12). After each action, we check whether the objective value $S$ has converged using a threshold $\tau$ (lines 14--19). After optimization, we obtain an evolved tree sequence $\mathcal{H}$ that reveals intuitive structural patterns associated with fake or real news propagation, and also provides an interpretable view of the classifier's behavior.

\begin{algorithm}[bhtp]
\caption{Score-guided Optimization}
\small
\label{alg:classifier_optim}
\begin{algorithmic}[1]
    \REQUIRE Initial tree $G_{init}$, Model $\mathcal{M}$, Iterations $T$, Direction $d \in \{1, -1\}$, Convergence threshold $\tau$
    \ENSURE Evolved  sequence $\mathcal{H}=\{(G_0, S_0), \dots, (G_k, S_k)\}$

    \STATE Define objective: $\Phi(G) = d \cdot \text{Prob}_{\mathcal{M}}(\text{fake} \mid G)$
    \STATE $G_{cur} \gets G_{init}$\;
    \STATE $\mathcal{H} \gets \{(G_{cur}, \Phi(G_{cur}))\}$
    
    \FOR{$t = 1$ \TO $T$}
        \STATE $\mathcal{N}(G_{cur}) \gets \{ \text{all valid leaf-node migrations } \delta=(v, p_{old}, p_{new}) \}$
        
        \STATE $G^*, S^* \gets \text{None}, \text{None}$
        
        \FORALL{candidate move $\delta \in \mathcal{N}(G_{cur})$}
            \STATE $G' \gets G_{cur} \oplus \delta$\;
            \STATE $S' \gets \Phi(G')$
            
            \IF{$S' > S^*$}
                \STATE $G^*, S^* \gets G', S'$
            \ENDIF
        \ENDFOR
        
        \IF{$(S^* - S_{cur}) > \tau$}
            \STATE $G_{cur}, S_{cur} \gets G^*, S^*$
            \STATE $\mathcal{H} \gets \mathcal{H} \cup \{(G_{cur}, S_{cur})\}$
        
        \ELSE  
            \STATE \textbf{break}
        \ENDIF
    \ENDFOR
    \RETURN $\mathcal{H}$
\end{algorithmic}
\end{algorithm}

\begin{algorithm}[t]
\caption{Bound-guided Optimization}
\label{alg:bound_optim}
\small
\begin{algorithmic}[1]
    \REQUIRE Initial tree $G_{0}$, Max iterations $T$, Direction $d \in \{1, -1\}$, Convergence threshold $\tau$
    \ENSURE Evolution sequence $\mathcal{H}=\{G_0, G_1, \dots, G_k\}$
    
    \STATE Define objective: $\Phi(G) = d \cdot \mathcal{B}(\mu_1, \dots, \mu_n)$ 
    \STATE $G_{cur} \gets G_0, \quad \mathcal{H} \gets \{G_0\}$
    
    \FOR{$t = 0$ \TO $T$}
        \STATE Compute all eigenvalues $\{\mu_i\}$ and eigenvectors $\{\mathbf{u}_i\}$ for $G_{cur}$
        
        \STATE $\mathcal{N}(G_{cur}) \gets \{ \text{all valid leaf-node migrations } \delta=(v, p_{old}, p_{new}) \}$
        
        \FORALL{candidate move $\delta \in \mathcal{N}(G_{cur})$}
            \STATE $\Delta \mu_i(\delta) \approx (u_{i}[v] - u_{i}[p_{new}])^2 - (u_{i,v} - u_{i}[p_{old}])^2, \quad \forall i$
            
            \STATE $\tilde{\mu}_i \gets \mu_i + \Delta \mu_i(\delta)$ 
            
            \STATE $\tilde{\Phi}(\delta) \gets d \cdot \mathcal{B}(\tilde{\mu}_1, \dots, \tilde{\mu}_n)$ 
        \ENDFOR
        
        \STATE $\delta^* = \arg\max_{\delta \in \mathcal{N}(G_{cur})} \left[ \tilde{\Phi}(\delta) \right]$
        
        \IF{$\tilde{\Phi}(\delta^*) - \Phi(G_{cur}) > \tau$}
            \STATE $G_{cur} \gets G_{cur} \oplus \delta^*$ 
            \STATE$\mathcal{H} \gets \mathcal{H} \cup \{G_{cur}\}$
        \ELSE
            \STATE \textbf{break}
        \ENDIF
    \ENDFOR
    \RETURN $\mathcal{H}$
    
\end{algorithmic}
\end{algorithm}

\subsection{Bound-guided Optimization.}
\label{sec:bound-guided}

In Section~\ref{sec:score-guided}, we optimized propagation structures indirectly through the classifier score, and then infer the global propagation patterns of fake and real news. In this section, we instead optimize specific structural bounds directly and examine how the evolution of these structural properties affects the fake news prediction. The full pseudo code is presented in Algorithm~\ref{alg:bound_optim}.

To optimize a target bound through leaf-node migration, we must evaluate how each candidate migration changes that bound. A direct approach is to compute the eigenvalues of every resulting graph exactly, and then use them to calculate the target bound. However, the number of candidate migrations in one step is usually large, ranging from $n-1$ to $(n-1)^2$, where $n$ is the number of nodes. For a path graph, there are $n-1$ possible migrations, while for a star graph, there are $(n-1)^2$. As a result, performing eigendecomposition for all candidates at every step is computationally expensive. To balance computational cost and estimation of accuracy, we propose a first-order approximation method to estimate the eigenvalue change caused by each leaf-node migration, as stated in Proposition~\ref{prop:estimate_mu}.

\begin{figure*}[t]
    \centering
    \begin{subfigure}[t]{0.32\textwidth}
        \centering
        \includegraphics[width=\linewidth]{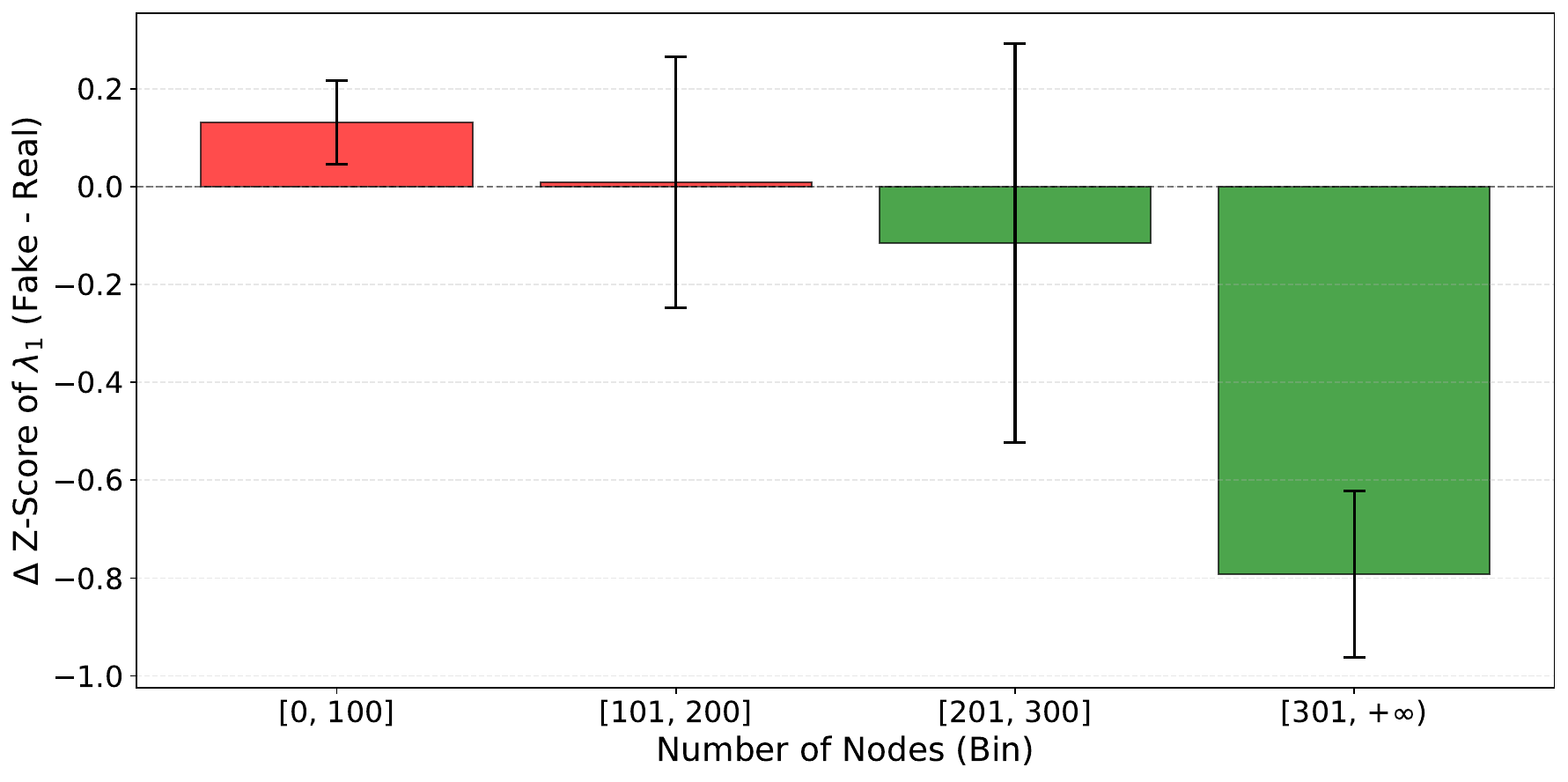}
        \caption{$\Delta z$-score of $\lambda_1$}
        \label{fig:lambda1}
    \end{subfigure}
    \hfill
    \begin{subfigure}[t]{0.32\textwidth}
        \centering
        \includegraphics[width=\linewidth]{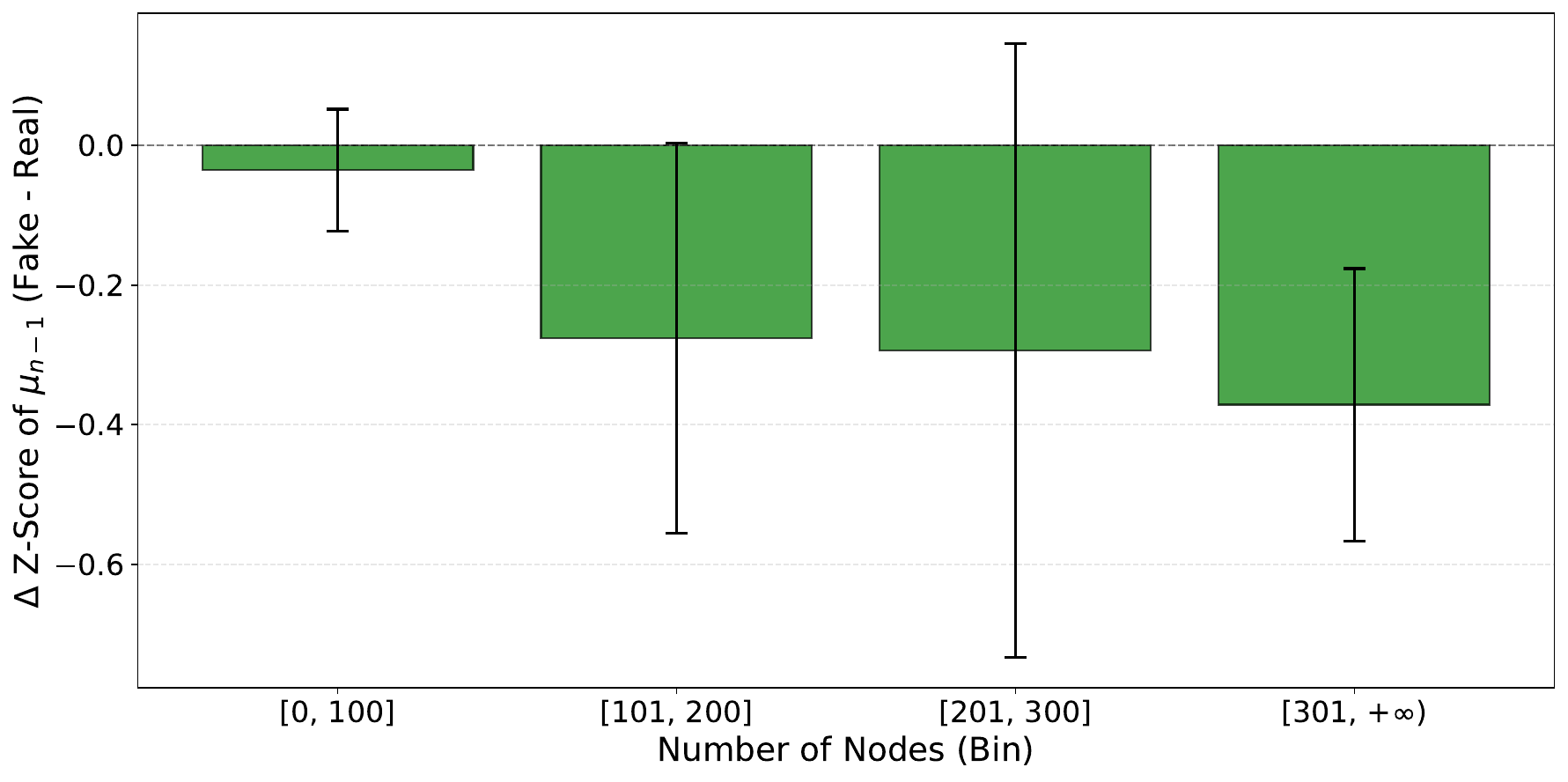}
        \caption{$\Delta z$-score of $\mu_{n-1}$}
        \label{fig:mu_n-1}
    \end{subfigure}
    \hfill
    \begin{subfigure}[t]{0.32\textwidth}
        \centering
        \includegraphics[width=\linewidth]{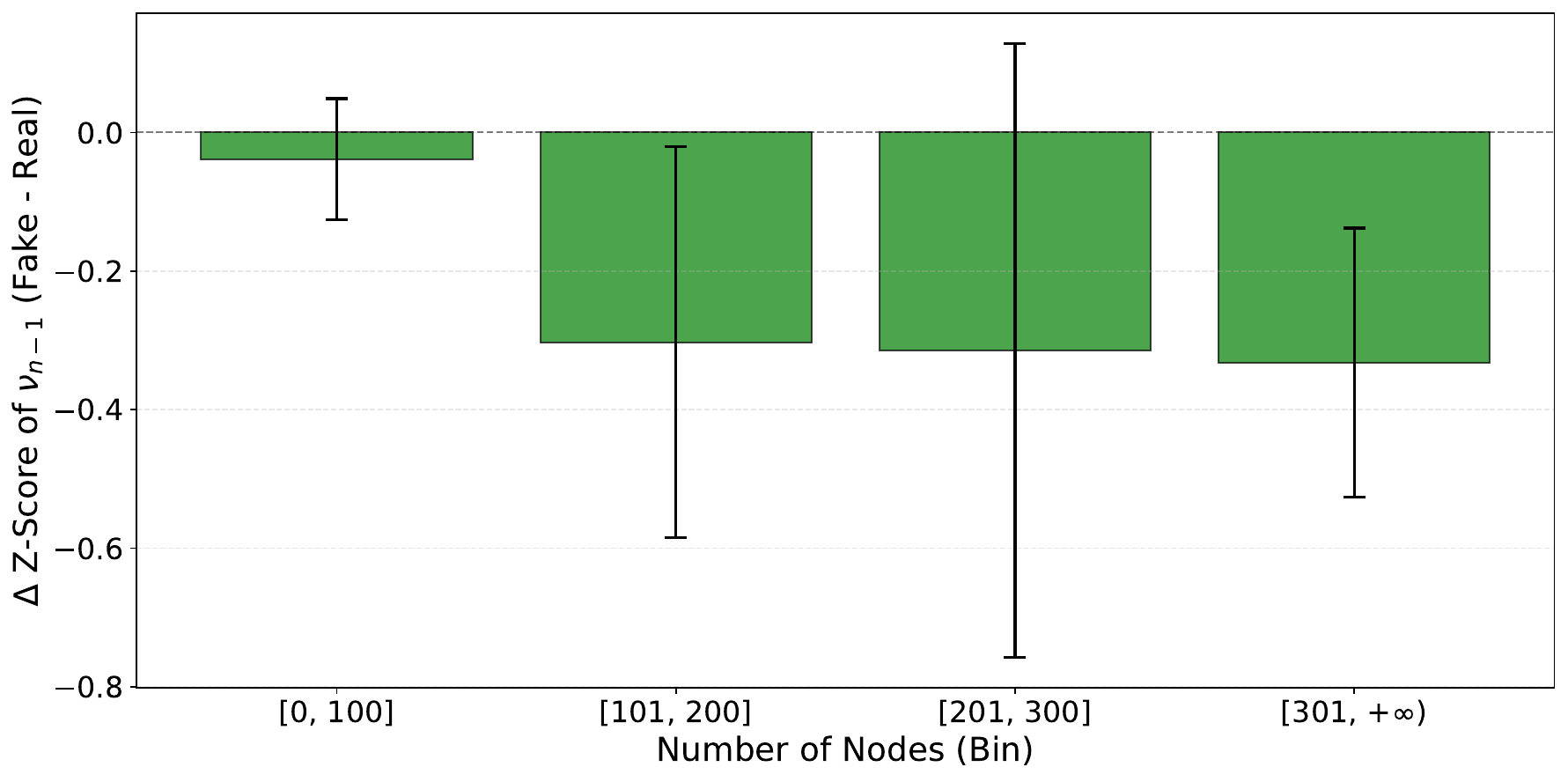}
        \caption{$\Delta z$-score of $\nu_{n-1}$}
        \label{fig:nu_n-1}
    \end{subfigure}\vspace{-3mm}
    \caption{\small Mean $z$-score differences across node-count bins for three eigenvalues, $\lambda_1$, $\mu_{n-1}$, and $\nu_{n-1}$, on the Weibo22 dataset. Red indicates that eigenvalues of fake news are larger than real news, while green indicates the opposite.}
    \label{fig:three_plots}
    \vspace{-4mm}
\end{figure*}

\begin{proposition}
\label{prop:estimate_mu}
Let $T$ be a tree and $u_i$ be its $i$-th orthonormal Laplacian eigenvector. When we move a leaf node $v$ from $p_{\text{old}}$ to $p_{\text{new}}$, we can estimate the change of each eigenvalue $\mu_i$ by
\begin{equation}
    \Delta \mu_i \approx 
    (u_i[v] - u_i[p_{\text{new}}])^2 
    - (u_i[v] - u_i[p_{\text{old}}])^2,
\label{eq:approx_delta_mu}
\end{equation}
\end{proposition}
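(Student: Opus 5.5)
The plan is to treat the leaf migration as a low-rank symmetric perturbation of the Laplacian and apply first-order (Rayleigh–Schrödinger / Hellmann–Feynman) eigenvalue perturbation theory.

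\textbf{Step 1: write the migration as a perturbation of $L$.} For an edge $(a,b)$, set $\mathbf{b}_{ab} = \mathbf{e}_a - \mathbf{e}_b$. The Laplacian of a graph decomposes as a sum of rank-one terms,
\[
L = \sum_{(a,b)\in E} \mathbf{b}_{ab}\mathbf{b}_{ab}^\top .
\]
Moving the leaf $v$ deletes the edge $(v,p_{\text{old}})$ and inserts the edge $(v,p_{\text{new}})$. The new Laplacian is therefore
\[
L' = L + \Delta L, \qquad \Delta L = \mathbf{b}_{v p_{\text{new}}}\mathbf{b}_{v p_{\text{new}}}^\top - \mathbf{b}_{v p_{\text{old}}}\mathbf{b}_{v p_{\text{old}}}^\top .
\]
Because $v$ is a leaf and $p_{\text{new}}$ lies in $T - v$, the result is again a tree on the same $n$ nodes. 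This matches the size-control principle of Section~\ref{sec:structural_decode}.

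\textbf{Step 2: apply first-order perturbation.} Suppose $\mu_i$ is a simple eigenvalue of the symmetric matrix $L$ with orthonormal eigenvector $u_i$. Introduce the path $L(\epsilon) = L + \epsilon\,\Delta L$. Differentiating $L(\epsilon)u_i(\epsilon) = \mu_i(\epsilon)u_i(\epsilon)$ at $\epsilon = 0$ and left-multiplying by $u_i^\top$ gives
\[
\mu_i'(0) = u_i^\top \Delta L\, u_i .
\]
This holds because $u_i^\top(L-\mu_i I) = 0$, which kills the term involving the eigenvector derivative. Truncating the Taylor expansion at $\epsilon = 1$ then yields
\[
\Delta\mu_i \approx u_i^\top \Delta L\, u_i .
\]

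\textbf{Step 3: evaluate the quadratic form.} Since $u^\top \mathbf{b}_{ab}\mathbf{b}_{ab}^\top u = (u[a]-u[b])^2$, the expression reduces to
\[
u_i^\top \Delta L\, u_i = (u_i[v]-u_i[p_{\text{new}}])^2 - (u_i[v]-u_i[p_{\text{old}}])^2 ,
\]
which is exactly \eqref{eq:approx_delta_mu}. As a sanity check, for $i = n$ we have $u_n \propto \mathbf{1}$, so the estimate gives $0$, consistent with $\mu_n = 0$ being preserved. I would also note an exact consistency property: summing the estimates over an orthonormal eigenbasis gives $\mathrm{tr}(\Delta L) = 0$, matching $\sum_i \mu_i = 2(n-1)$ being unchanged.

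\textbf{Main obstacle.} The difficulty is justifying the approximation rather than the algebra. The perturbation is not infinitesimal, since $\|\Delta L\|$ is $O(1)$, so the estimate is genuinely heuristic. I would make the error explicit by bounding the second-order term
\[
\sum_{j\ne i} \frac{(u_j^\top \Delta L\, u_i)^2}{\mu_i-\mu_j} .
\]
This is small when the eigengap is large relative to the localized entries of the eigenvectors at $v$, $p_{\text{old}}$ and $p_{\text{new}}$. Weyl's inequality supplies a crude complementary guarantee, $|\Delta\mu_i| \le \|\Delta L\|_2 \le 4$.

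Repeated eigenvalues, which are common in trees (for example stars), also need care. For these I would replace $u_i$ with an arbitrary eigenvector from the eigenspace and state that the first-order shifts are the eigenvalues of $\Delta L$ compressed to that eigenspace. I would then phrase the proposition as an approximation valid for simple eigenvalues.
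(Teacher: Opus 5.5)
Your proposal is correct and follows essentially the same route as the paper. First-order perturbation of $L u = \mu u$ gives $\Delta\mu_i \approx u_i^\top \Delta L\, u_i$, and the rank-one edge terms $(e_a - e_b)(e_a - e_b)^\top$ for the inserted and deleted edges turn this into the stated difference of squares; the paper handles addition and deletion separately and then adds them, which is the same thing by linearity. Your extra points on simple versus repeated eigenvalues, the second-order term, the trace check, and the Weyl bound are sound refinements that the paper's heuristic derivation leaves out.
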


\begin{proof}
Let $L u = \mu u$. After a small perturbation $\Delta L$, we have
\begin{align}
    (L + \Delta L)(u + \Delta u) = (\mu + \Delta \mu)(u + \Delta u).
\label{eq:perturb_L}
\end{align}
Ignoring higher-order terms, we obtain
\begin{align}
    L \Delta u + \Delta L u \approx \mu \Delta u + \Delta \mu u.
\end{align}
Left-multiplying by $u^T$ yields
\begin{align}
    u^T L \Delta u + u^T \Delta L u \approx \mu u^T \Delta u + \Delta \mu u^T u.
\end{align}
As $L$ is symmetric, $u^T L = \mu u^T$, and $u^T u = 1$, so
\begin{equation}
    \Delta \mu \approx u^T \Delta L u.
    \label{eq:approxmu}
\end{equation}
For adding new edge $(r,c)$, the perturbation matrix can be written as
\begin{equation}
    \Delta L = (e_r - e_c)(e_r - e_c)^T,
    \label{eq:deltaL}
\end{equation}
where $e_r$ is the standard basis vector.

Substituting Eq.~\ref{eq:deltaL} into Eq.~\ref{eq:approxmu}, we obtain
\begin{equation}
\label{eq:delta_mu_add}
    \Delta \mu_{\text{add}(r,c)} \approx (u_r - u_c)^2.
\end{equation}
Similarly, for deleting an edge $(r,c)$, we have
\begin{equation}
\label{eq:delta_mu_del}
    \Delta \mu_{\text{delete}(r,c)} \approx - (u_r - u_c)^2.
\end{equation}

By adding Eq~\ref{eq:delta_mu_add} and Eq~\ref{eq:delta_mu_del}, we can get Eq~\ref{eq:approx_delta_mu}.

\end{proof}
The same derivation can be applied to other types of eigenvalues, such as $\lambda_i$ and $\nu_i$. In particular, we obtain
\begin{equation}
\Delta \lambda_i \approx 
    2u_i[v]\bigl(u_i[p_{\text{new}}] - u_i[p_{\text{old}}]\bigr).
\label{eq:approx_delta_lambda}
\end{equation}

As the perturbation induced by a leaf-node migration is small, this first-order approximation works well in practice. The accuracy also tends to improve as the graph grows. For further analysis see Appendix~\ref{app:analysis_approx}.\vspace{1mm}

\noindent\textbf{Time Complexity.}
Let \(n\) be the number of nodes in the current propagation graph, and \(M\) be the number of valid leaf-node migrations in one iteration. As \(M\) ranges from \(O(n)\) to \(O(n^2)\), corresponding to chain and star graphs in the extreme cases, the cost of exact bound-guided optimization can be high. Without approximation, we need to recompute the eigendecomposition for every candidate migration, which gives a per-iteration time complexity of \(O(Mn^3)\). This ranges from \(O(n^4)\) to \(O(n^5)\). With first-order approximation, we perform eigendecomposition only once for the current graph, costing \(O(n^3)\), and then evaluate each candidate migration in \(O(n)\) time by estimating the change of eigenvalues. The resulting per-iteration complexity is \(O(n^3 + Mn)\), dominated by \(O(n^3)\). So, the approximation reduces per-iteration cost from \(O(Mn^3)\) to \(O(n^3)\).

\section{Experiments.}
\label{sec:exp}
We conduct a series of experiments to evaluate the proposed framework. First, we perform exploratory analyses of eigenvalues and spectral bounds in the news propagation setting. This allows us to examine whether fake and real news exhibit distinct structural differences, and to assess the tightness of the bounds introduced in Section~\ref{sec:encoding}. Second, we use these bounds as graph encodings for the downstream fake news classification task, in order to evaluate their effectiveness. Finally, based on the trained classifier, we apply the discrete structural optimization algorithm proposed in Section~\ref{sec:structural_decode}. This helps us obtain an intuitive understanding of the differences between fake and real news propagation patterns. More details about datasets and settings are provided in Appendix~\ref{app:exp_details}.

\begin{table*}[!t]
\centering
\caption{\small Comparison of spectral bounds for estimating structural properties on the \texttt{Weibo22} dataset. Entries marked with $^*$ are statistically significant at $p<0.05$. The bound formulas in the upper part of the table directly correspond to the meanings of the associated structural features, whereas those in the lower part do not. In the lower part, we further examine the relationship between the structure virality bound and the actual diameter, as well as the relationship between the max-breadth bound and the actual number of leaf nodes.}\vspace{-2mm}
\label{tab:tightness}
\resizebox{\textwidth}{!}{%
\scriptsize
\begin{tabular}{llllll}
\toprule
Bound Equation & Structural Feature & Rel. Error ($\epsilon$) & Spearman ($\rho$) & Kendall ($\tau$) & Pearson ($r$) \\
\midrule
$\lambda_1$ & max degree & $0.69 \pm 0.23$ & $0.999^*$ & $0.987^*$ & $0.905^*$ \\
\# distinct $\mu_i - 1$ & diameter & $5.29 \pm 8.1$ & $0.867^*$ & $0.718^*$ & $0.677^*$ \\
$\mu_1$ & num nodes & $0.18 \pm 0.20$ & $0.989^*$ & $0.928^*$ & $0.934^*$ \\
$\lambda_1^2(\lambda_1^2-1)^0$ (k=$1$) & max-breadth & $0.08 \pm 0.21$ & $0.998^*$ & $0.981^*$ & $0.998^*$ \\ 

\midrule

$\frac{2}{n-1} \sum 1/\mu_i$ & diameter & $0.43 \pm 0.20$ & $0.930^*$ & $0.799^*$ & $0.899^*$ \\

$\lambda_1^2(\lambda_1^2-1)^0$ (k=$1$) & \# leaf node & $0.14 \pm 0.22$ & $0.996^*$ & $0.959^*$ & $0.955^*$ \\

\bottomrule
\end{tabular}%
}
\vspace{-4mm}
\end{table*}

\subsection{Analysis of Eigenvalue Distribution.}
We compute the eigenvalues of every propagation graph in the \texttt{Weibo22} dataset. For fake and real news separately, we then standardize each eigenvalue by its $z$-score and compute the mean $\Delta z$-score, defined as the fake-news mean minus the real-news mean, for each eigenvalue. Figure~\ref{fig:three_plots} shows the distributions of $\lambda_1$, $\mu_{n-1}$, and $\nu_{n-1}$.

(1) Figure~\ref{fig:lambda1} shows that, for small propagation graphs, fake news usually has a larger $\lambda_1$. This suggests the presence of more highly connected hub nodes and a stronger ability for global diffusion. However, as the graph size increases, the $\lambda_1$ of real news becomes larger than that of fake news. This indicates that we \textit{cannot simply conclude that fake news spreads more broadly without considering graph size and related factors}.

(2) Figure~\ref{fig:mu_n-1} shows that real news usually has a larger $\mu_{n-1}$, which indicates a smaller graph diameter. A smaller diameter often corresponds to a shallower propagation depth. This is consistent with the finding that \textit{fake news tends to spread deeper}~\cite{vosoughi2018spread,zhou2019network}.

(3) Figure~\ref{fig:nu_n-1} shows that fake news often has a smaller $\nu_{n-1}$. This suggests a smaller \textit{Cheeger constant}, meaning that \textit{fake news propagation graphs tend to be less well connected, show clearer community separation, and are more prone to fragmentation}.

Overall, these observations suggest that eigenvalues can help distinguish fake news from real news. They also provide a certain degree of structural interpretability.

\subsection{Analysis of Bounds Tightness.}

We further examine whether the spectral bounds introduced in Section~\ref{sec:encoding} can faithfully reflect true graph properties in the news propagation setting. We evaluate the quality of these bounds from two perspectives: (1) \textbf{Tightness}, measuring how close a bound is to the corresponding graph property. We use the relative error $\epsilon_i = \frac{|q_i - b_i|}{q_i}$,
where $q_i$ denotes the true quantity and $b_i$ denotes its corresponding bound. A smaller $\epsilon_i$ shows a tighter bound. We compute the relative error for each graph in our data and report the mean value. (2) \textbf{Consistency}, measuring whether a bound preserves the relationship with the true property even when the numerical gap is large. We evaluate consistency from three aspects: (i) linear correlation, measured by Pearson's $r$~\cite{pearson1895vii}; (ii) monotonic correlation, measured by Spearman's $\rho$~\cite{spearman1961proof}; and (iii) pairwise agreement, measured by Kendall's $\tau$~\cite{kendall1938new}. 
The results in Table~\ref{tab:tightness} show two main findings. First, many spectral bounds are useful proxies for their corresponding graph properties. Some are reasonably tight with relative errors below $0.70$, while others remain highly correlated with the true quantities even when the numerical gap is larger. This indicates that spectral bounds can still provide reliable structural signals for characterizing news propagation. Second, cross-property comparisons reveal structural characteristics of dataset itself. For example, the structure virality bound is even closer to the true diameter than the diameter bound itself, and the max-breadth bound is close to the number of leaf nodes. Together, these observations suggest that propagation in \texttt{Weibo22} tends to be relatively star-like.

\begin{table}[!thbp]
\centering
\caption{\small Comparing classification performance on \texttt{Weibo22} and \texttt{Twitter16}. The best and second-best results are in bold and italics, respectively. Ties are broken by smaller standard deviation. Categories are abbreviated.}\vspace{-2mm}
\label{tab:classification_performance}
\resizebox{\linewidth}{!}{
\scriptsize
\setlength{\tabcolsep}{4pt}
\begin{tabular}{lcccc}
\toprule
& \multicolumn{2}{c}{\texttt{Weibo22}} & \multicolumn{2}{c}{\texttt{Twitter16}} \\
\cmidrule(lr){2-3} \cmidrule(lr){4-5}
Method & ACC & F1 & ACC & F1 \\
\midrule
Random & 47.92 & 49.16 & 22.98 & 22.94 \\
Structural Features & \textbf{59.65} & 56.52 & 33.03 & 30.79 \\
GCN & 48.86 & 47.77 & 27.42 & 27.35 \\
Bound Features & 59.40 & \textit{57.65} & \textbf{33.55} & \textbf{32.20} \\
\midrule
- \textit{C1. Branching} & \textit{59.47} & \textbf{57.96} & 33.16 & 31.51 \\
- \textit{C2. Scale} & 59.29 & 57.70 & 33.16 & 31.11 \\
- \textit{C3. Cohesion} & 59.25 & 57.70 & 33.55 & \textit{31.83} \\
- \textit{C4. Span} & 57.88 & 55.64 & 33.55 & 31.69 \\
- \textit{C5. Diffusion} & 58.31 & 54.53 & \textit{33.55} & 31.77 \\
\bottomrule
\end{tabular}
}
\vspace{-5mm}
\end{table}

\begin{figure*}[th]
    \includegraphics[width=\linewidth]{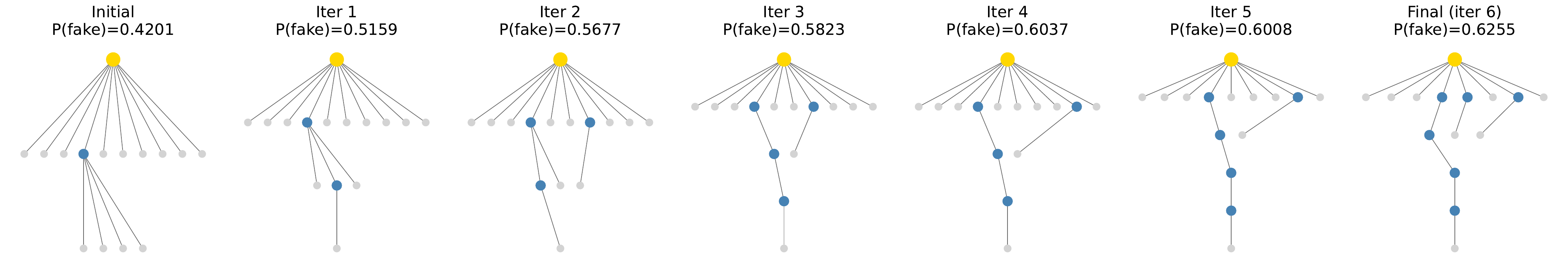}
    \vspace{-8mm}
    \caption{\small Propagation graph evolution toward a more fake news like structure under score-guided optimization. Yellow, blue, and gray nodes denote the root, internal, and leaf nodes, respectively. The title of each graph shows the current iteration $t$ and the classifier prediction score.}
    \label{fig:more_fake}
    \vspace{-1mm}
\end{figure*}

\begin{figure*}[th]
    \includegraphics[width=\linewidth]{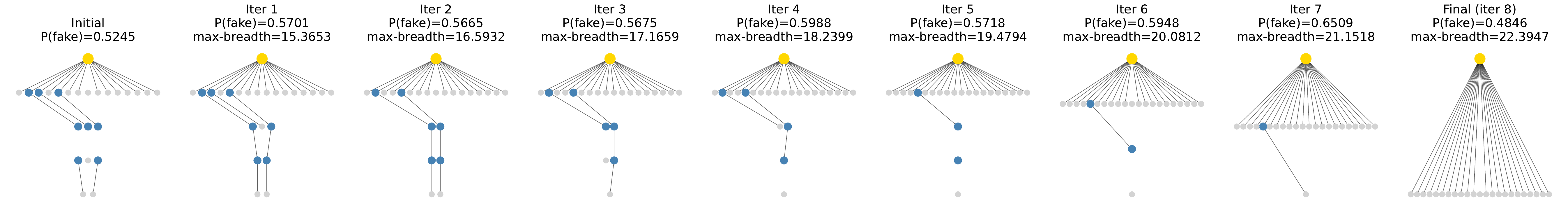}
    \vspace{-7mm}
    \caption{\small Propagation graph evolution toward higher max-breadth under bounds-guided optimization. Yellow, blue, and gray nodes denote the root, internal, and leaf nodes, respectively. The title of each graph shows the current iteration $t$, the classifier prediction score, and bound value.}
    \label{fig:increase_max_breadth}
    \vspace{-4.5mm}
\end{figure*}

\subsection{Effectiveness of Bounds in Classification.}
\label{sec:bounds_classifier}

To evaluate the effectiveness of spectral bounds in the downstream fake news classification using only structural information, we compare bound-based features with other structure-only methods, including hand-crafted topological features and GCN~\cite{kipf2016semi}. More details about the baselines are provided in Appendix~\ref{app:classification_details}. The results in Table~\ref{tab:classification_performance} show that both bound-based features and topological features perform much better than GCN. Moreover, bound-based features achieve performance comparable to topological features, even though some bounds are not numerically close to their exact properties. This suggests that spectral bounds still capture highly informative structural signals for distinguishing fake news from real news. In addition, by ablating the bounds in each category, we find that different datasets show different preferences over structural aspects. For example, on \texttt{Weibo22}, fake and real news differ more clearly in \textit{C4. Propagation Span} and \textit{C5. Diffusion Dynamics}, which is consistent with the distributional patterns in Figure~\ref{fig:three_plots}. In contrast, on \texttt{Twitter16}, the differences are more pronounced in \textit{C1. Branching Capacity} and \textit{C2. Cascade Scale}. This suggests that the propagation patterns of fake and real news may vary across events or platforms.

\subsection{Evolution of Propagation Structures.}
We apply algorithms proposed in Section~\ref{sec:structural_decode} with the trained classifier from Section~\ref{sec:bounds_classifier} to obtain Figure~\ref{fig:more_fake} and Figure~\ref{fig:increase_max_breadth}. The figures provide a deeper understanding about fake news propagation patterns. More cases can be found in Appendix~\ref{app:more_decode_cases}.\vspace{1mm}


\noindent \textbf{Evolution Toward ``More Fake.''} \textit{Although fake news propagation is often deeper and more unbalanced, increasing these factors alone does not always make a cascade more fake-like.} Figure~\ref{fig:more_fake} shows that the graph evolves from an initial depth of 2 to a final depth of 5, while becoming more unbalanced as a deep subtree gradually emerges. This suggests that fake-like propagation tends to be associated with deeper and more unbalanced structures. However, the effect is not monotonic. From iter 4 to iter 5, the depth increases but \(P(\text{fake})\) decreases. From iter 5 to iter 6, the algorithm does not further increase depth, but instead moves a first-layer node to the second layer, which raises \(P(\text{fake})\).

\vspace{1mm}

\noindent \textbf{Evolution Toward Higher Max-Breadth.} \textit{Although prior studies suggest that fake news typically spreads more broadly~\cite{vosoughi2018spread}, a broader propagation pattern does not necessarily make a cascade more fake-like.} Figure~\ref{fig:increase_max_breadth} shows that as the graph evolves to increase max-breadth, \(P(\text{fake})\) does not follow a stable monotonic trend. For example, the trees at iter 4 and iter 5 have the same depth, but the broader tree at iter 5 still receives a lower \(P(\text{fake})\). From iter 5 to iter 8, depth continues to decrease and breadth continues to increase, yet \(P(\text{fake})\) keeps rising. However, in the final iteration, \(P(\text{fake})\) drops sharply. This suggests that whether a propagation pattern appears more fake-like or more real-like depends on the overall structural configuration, rather than on any single factor alone.
\vspace{1mm}

\noindent Together, these evolutions suggest that \textit{fake-like propagation depends on a more specific structural configuration than simply being deeper, unbalanced, or broader.}

\section{Conclusion.}
We have presented, to our knowledge, the first spectral representation of news propagation graphs. By connecting the graph spectrum to structural properties, our representation captures branching, scale, cohesion, span, and diffusion within one framework rather than as separate handcrafted features. Our performance matches handcrafted features on structure-only classification and, through perturbation-based decoding, yields interpretable evolution trajectories showing that the fake-real distinction rests on a cascade's global configuration rather than any single structural trait. \textit{Spectral analysis thus offers a general, systematic, and interpretable approach for studying how information spreads, fake or otherwise.}\vspace{3mm}

\noindent \textbf{Acknowledgment.} This research was supported in part by the NSF under awards 2241070 and 1942929.

\bibliographystyle{siamplain}
\bibliography{references}

\appendix

\section{Example: Limits of Standard Topological Statistics.}
\label{app:spectral-example}

\begin{figure*}[t]
    \centering
    \includegraphics[width=0.85\textwidth]{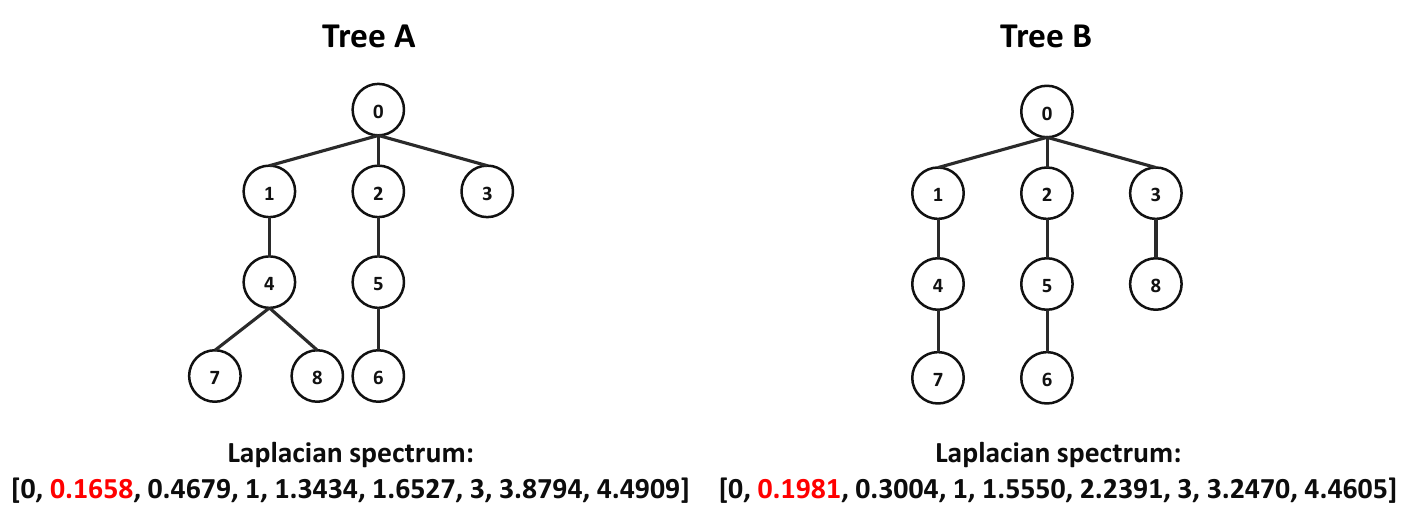}
    \caption{
    Two propagation trees with identical topological statistics but different Laplacian spectra. 
    Both trees have $n=9$, depth $3$, maximum out-degree $3$, maximum breadth $3$, structural virality $2.83$, and depth of maximum out-degree $0$. 
    However, their second smallest Laplacian eigenvalues differ: $\mu_{8}=0.1658$ for Tree A and $\mu_{8}=0.1981$ for Tree B, indicating different levels of algebraic connectivity and structural compactness.
    }
    \label{fig:spectral-example}
\end{figure*}

To illustrate why spectral representations can provide information beyond standard topological statistics, we consider two rooted propagation trees with nine nodes, shown in Figure~\ref{fig:spectral-example}. The two trees have the same number of nodes, depth, maximum out-degree, maximum breadth, structural virality, and depth of the maximum out-degree node. Therefore, a representation based only on these commonly used statistics would treat them as nearly indistinguishable.
However, their Laplacian spectra are different. In particular, the second smallest Laplacian eigenvalue is $0.1658$ for Tree A and $0.1981$ for Tree B. This eigenvalue, also known as \textit{algebraic connectivity}, reflects global connectivity and bottleneck structure. The larger value for Tree B suggests that it is more compact and balanced, while Tree A contains a more pronounced local branching imbalance. This example shows that \textit{even when cascades appear similar under a set of handcrafted topological features, their ``spectral fingerprints'' can reveal propagation-relevant structural differences}.

\section{Spectral Bounds Details.}
\label{app:bounds_details}

\begin{table*}[!thbp]
\centering
\footnotesize
\renewcommand{\arraystretch}{1.3}
\caption{Spectral bounds categorized by propagation-related structural properties.}

\begin{tabularx}{\textwidth}{
>{\raggedright\arraybackslash}p{4cm}
>{\raggedright\arraybackslash}p{6.5cm}
>{\raggedright\arraybackslash}X
}

\toprule
\textbf{Category} & \textbf{Meaning / Property} & \textbf{Bound Equation} \\
\midrule

\multirow{7}{*}{C1. Branching Capacity}

& Max degree / Mean degree
& $\bar d \le \lambda_1 \le \Delta$ \cite{brouwer2011spectra} \\

& Maximum sum of degrees of adjacent nodes
& $\mu_1 \le \max_{x\sim y}(d_x+d_y)$ \cite{brouwer2011spectra} \\

& Sum of top-$m$ degrees
& $1+\sum_{i=1}^{m} d_i \le \sum_{i=1}^{m} \mu_i$ \cite{brouwer2011spectra} \\

& Mean branching
& $\bar b \le \dfrac{n\lambda_1}{2|I(T)|}$ \textcolor{blue}{Prop \ref{prop:branching-factor}}\\

& Maximum branching at layer $k$
& $b_k \le \lambda_1^2(\lambda_1^2-1)^{k-1}$  \textcolor{blue}{Prop \ref{prop:max-breadth}}\\


& Degree entropy
& $H_d \le \log(\lambda_1^2+1)$ \textcolor{blue}{Prop \ref{prop:degree-entropy}} \\

\midrule

\multirow{6}{*}{C2. Cascade Scale}

& Number of edges
& $\dfrac{2e}{n} \le \lambda_1 \le \sqrt{2e}$ \cite{brouwer2011spectra} \\

& 
& $\lambda_1(\lambda_1+1) \le 2e$ \cite{brouwer2011spectra} \\

& 
& $\sum_{i=1}^{t} \mu_i \le e+\binom{t+1}{2}$ \cite{brouwer2011spectra} \\

& Edge / node relation
& $\lambda_1 \le \sqrt{2e-n+1}$ \cite{brouwer2011spectra} \\

& Number of nodes
& $\mu_2 \le \left\lfloor \frac{n}{2} \right\rfloor$ \cite{zhang2011laplacian} \\

& 
& $\mu_1 \le n$ \cite{zhang2011laplacian} \\

&
& $\sum_i \nu_i \le n$ \cite{chung1997spectral} \\

& 
& $\nu_{n-1} \le \frac{n}{n-1}$ \cite{chung1997spectral} \\

& 
& $\nu_1 \ge \frac{n}{n-1}$ \cite{chung1997spectral} \\

\midrule

\multirow{6}{*}{C3. Structural Cohesion}

& Separation
& $\dfrac{|X||Y|}{(n-|X|)(n-|Y|)} \le
\left(\dfrac{\mu_1-\mu_{n-1}}{\mu_1+\mu_{n-1}}\right)^2$ \cite{brouwer2011spectra}\\

&
& $\dfrac{|X||Y|}{n(n-|X|-|Y|)} \le
\dfrac{(\mu_1-\mu_{n-1})^2}{4\mu_1\mu_{n-1}}$ \cite{brouwer2011spectra} \\

& Vertex connectivity
& $\kappa(G) \ge \mu_{n-1}$ \cite{brouwer2011spectra} \\

& Cheeger constant
& $\dfrac{\nu_{n-1}}{2} \le h(G) \le \sqrt{2\nu_{n-1}}$ \cite{chung1997spectral}\\

&
& $h^{'}(H) \ge \frac{\mu_{n-1}}{2}$ \cite{naderi2026introduction}\\


& Independent number
& $\alpha(G)\le \dfrac{-n\lambda_n}{\lambda_1-\lambda_n}$ \cite{brouwer2011spectra} \\

& 
& $\alpha(G)\le
\min(\{\lambda_i\ge0\},  \{\lambda_i\le0\})$ \cite{naderi2026introduction} \\

& Chromatic number
& $1-\dfrac{\lambda_1}{\lambda_n}\le \chi(G)\le 1+\lambda_1$ \cite{brouwer2011spectra} \\

& Clique number
& $\dfrac{n}{n-\lambda_1} \le \omega(G) \le 1+\lambda_1$ \cite{naderi2026introduction} \\

\midrule

\multirow{7}{*}{C4. Propagation Span}

& Bandwidth
& $b = \left\lceil \dfrac{n\mu_{n-1}}{\mu_1} \right\rceil$ \cite{brouwer2011spectra} \\

& Structural virality
& $\dfrac{2}{n-1}\sum_{i=1}^{n-1}\dfrac{1}{\mu_i}$ \cite{sivasubramanian2009average} \\

& Diameter
& $\dfrac{4}{n\mu_{n-1}} \le D(G) \le
2\sqrt{\dfrac{2\Delta}{\mu_{n-1}}}\log_2 n$ \cite{alon1985lambda1} \\

& (for regular graph)
& $D(G) \le
\dfrac{\log(n-1)}{\log\left(\dfrac{\nu_1+\nu_{n-1}}{\nu_1-\nu_{n-1}}\right)}$  \cite{chung1997spectral} \\

& 
& $D(G) \le \text{distinct number of } \nu_i \text{ or } \mu_i - 1$ \cite{chung1997spectral, brouwer2011spectra} \\

&
& $m_T[0,1) \ge (d+1)/3$~\cite{guo2025laplacian} \\

\midrule

\multirow{3}{*}{C5. Diffusion Dynamics}


& Random walk convergence
& $O\!\left(\dfrac{\log n}{\nu_{n-1}}\right)$ \cite{chung1997spectral} \\

& Routing time
& $rt(G,\sigma)=O\!\left(\dfrac{\log^2 n}{\nu_{n-1}}\right)$ \cite{chung1997spectral} \\

& Conductance
& $\dfrac{\mathrm{vol}(\delta X)}{\mathrm{vol}(X)} \ge
\dfrac{2\nu_{n-1}}{\nu_1+\nu_{n-1}}$ \cite{chung1997spectral} \\

& (coefficient of one bound)
& $\frac{2\nu_{n-1}}{\nu_1 + \nu_{n-1}} (2- \frac{2\nu_{n-1}}{\nu_1 + \nu_{n-1}})$ \cite{chung1997spectral} \\

& Spectral moments
& $m_2,\; m_4$ \cite{jin2020spectral} \\

\bottomrule
\end{tabularx}

\label{tab:spectral_bounds}
\end{table*}












We list all bounds in Table~\ref{tab:spectral_bounds}. Here are our proofs for mean branching, maximum branching at layer $k$, and degree entropy:

\begin{lemma}~\cite{das2004some}
\label{lem:max-degree}
Let $T$ be a tree with $n$ nodes, Let $\Delta$ be the maximum degree of $T$, then we have 
\begin{equation}
\Delta \le \lambda_1^2
\end{equation}
\end{lemma}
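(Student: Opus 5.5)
We prove $\Delta \le \lambda_1^2$ with the Rayleigh quotient characterization of the spectral radius, $\lambda_1 = \max_{x\neq 0} \frac{x^T A x}{x^T x}$, applied to $A^2$. Since $A$ is symmetric with eigenvalues $\lambda_1 \ge \cdots \ge \lambda_n$, the eigenvalues of $A^2$ are $\lambda_i^2$. The largest of these is $\max(\lambda_1^2,\lambda_n^2)$. For a tree, or more generally any bipartite graph, the spectrum is symmetric about $0$, so $\lambda_n=-\lambda_1$. Even without bipartiteness, Perron--Frobenius gives $|\lambda_n|\le\lambda_1$ for a nonnegative matrix. Either way $\lambda_1^2$ is the largest eigenvalue of $A^2$, hence
\[
\lambda_1^2 = \max_{x\neq 0}\frac{x^T A^2 x}{x^T x}.
\]

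Next we evaluate this quotient at a convenient test vector. Let $w$ be a vertex of maximum degree $\Delta$ and take $x=e_w$. Then
\[
e_w^T A^2 e_w = (A^2)_{ww} = \sum_j A_{wj}A_{jw} = d_w = \Delta,
\]
because $(A^2)_{ww}$ counts closed walks of length $2$ from $w$, and there is exactly one for each neighbor. Since $e_w^T e_w=1$, this gives $\lambda_1^2 \ge \Delta$.

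As a sanity check, an alternative route gives the same result. The star $K_{1,\Delta}$ formed by $w$ and its neighbors is a subgraph of $T$. By monotonicity of the spectral radius under taking subgraphs (Perron--Frobenius), $\lambda_1(T)\ge\lambda_1(K_{1,\Delta})=\sqrt{\Delta}$. Squaring gives the claim.

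The only step needing care is justifying that $\lambda_1^2$, and not $\lambda_n^2$, is the top eigenvalue of $A^2$. The Perron--Frobenius bound $|\lambda_i|\le\lambda_1$ for nonnegative symmetric matrices handles this, and it is immediate for trees by bipartite symmetry. Everything else is a one-line computation.
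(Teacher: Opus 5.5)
Your proof is correct. The paper never proves this lemma; it imports it by citation from Das and Kumar, so there is no internal argument to compare yours against. Your Rayleigh-quotient computation
\[
\lambda_1^2=\max_{x\neq 0}\frac{x^TA^2x}{x^Tx}\ \ge\ (A^2)_{ww}=\Delta,
\]
or equivalently the star-subgraph monotonicity argument, gives a self-contained proof of the fact the paper outsources.

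Two small remarks. First, your argument never uses acyclicity, so you have shown $\lambda_1\ge\sqrt{\Delta}$ for every graph. Second, you can avoid comparing $\lambda_1^2$ with $\lambda_n^2$ altogether. Plug the test vector $x=\sqrt{\Delta}\,e_w+\sum_{j\sim w}e_j$ directly into the Rayleigh quotient of $A$. Then $x^Tx=2\Delta$ and $x^TAx\ge 2\Delta\sqrt{\Delta}$, so $\lambda_1\ge\sqrt{\Delta}$ without any appeal to Perron--Frobenius or bipartite symmetry.
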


    

\subsection{Proof of Proposition 4.1 (Mean Branching Bound)}
\label{app:proof_mean_branch}

\begin{proof}
Let $k_v$ be the number of children of an internal node $v$. Then
\[
\bar b=\frac{1}{|I(T)|}\sum_{v\in V_{\mathrm{int}}} k_v.
\]
Since every edge in a rooted tree contributes exactly once to a parent-child relation,
\[
\sum_{v\in V_{\mathrm{int}}} k_v = n-1.
\]
Thus
\[
\bar b=\frac{n-1}{|I(T)|}.
\]
On the other hand, the average degree is
\[
\bar d=\frac{2(n-1)}{n}.
\]
Hence
\[
\bar b=\frac{n}{2|I(T)|}\bar d.
\]
Using the standard bound $\bar d\le \lambda_1$~\cite{brouwer2011spectra}, we obtain
\[
\bar b \le \frac{n\lambda_1}{2|I(T)|}.
\]
\end{proof}

\subsection{Proof of Proposition 4.2 (Maximum Branching at Layer k)}
\label{app:proof_max_branch}

\begin{proof}
Let $\Delta$ be the maximum degree of $T$. The root has at most $\Delta$ children, so
\[
b_1\le \Delta.
\]
For $k\ge 2$, each node at level $k-1$ has at most $\Delta-1$ children, since one incident edge is used by its parent. Therefore,
\[
b_k \le \Delta(\Delta-1)^{k-1}.
\]

Now we leverage Lemma~\ref{lem:max-degree}. By substituting which, we obtain
\[
b_k \le \lambda_1^2(\lambda_1^2-1)^{k-1}.
\]
\end{proof}

\subsection{Proof of Proposition 4.3 (Degree Entropy Bound)}
\label{app:proof_degree_entropy}

\begin{proof}
Let \(K\) denote the number of distinct degree values that appear in \(G\). As
degrees are integers between \(0\) and the maximum degree \(\Delta\), we have
\[
K \le \Delta+1.
\]

For a discrete distribution supported on \(K\) values, entropy is maximized by the
uniform distribution. Hence
\[
H_d = -\sum_j q_j \log q_j \le \log K \le \log(\Delta+1).
\]

As we already know \(\Delta \le \lambda_1^2\) from Lemma~\ref{lem:max-degree}, thus
\[
H_d \le \log(\Delta+1) \le \log(\lambda_1^2+1).
\]
\end{proof}




\section{Computational Analysis of First-Order Approximation.}
\label{app:analysis_approx}

\begin{figure*}[thbp]
    \centering
    \begin{subfigure}[t]{0.32\textwidth}
        \centering
        \includegraphics[width=\linewidth]{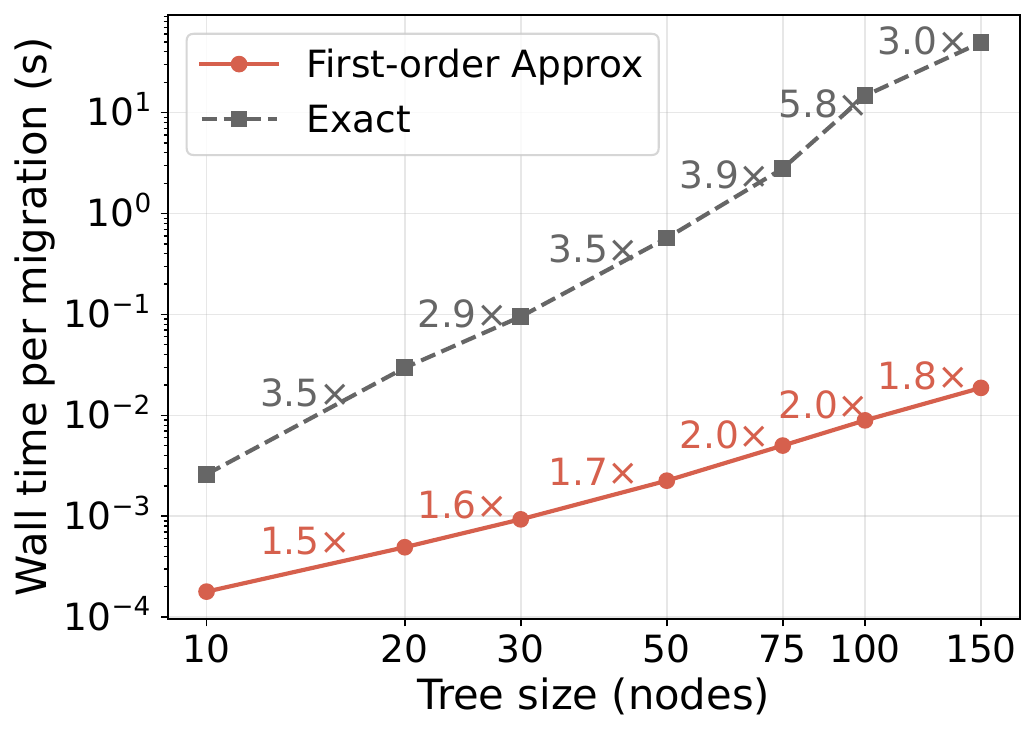}
        \caption{Wall time across tree size.}
        \label{fig:approx_wall_time}
    \end{subfigure}
    \hfill
    \begin{subfigure}[t]{0.32\textwidth}
        \centering
        \includegraphics[width=\linewidth]{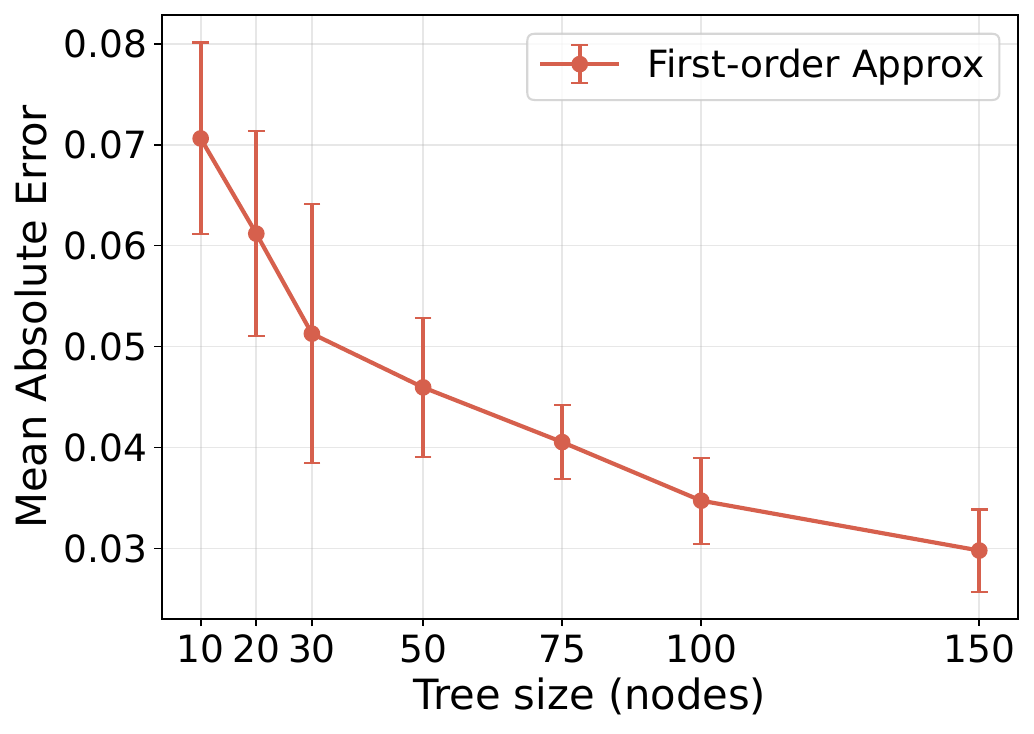}
        \caption{MAE across tree size.}
        \label{fig:approx_mae}
    \end{subfigure}
    \hfill
    \begin{subfigure}[t]{0.32\textwidth}
        \centering
        \includegraphics[width=\linewidth]{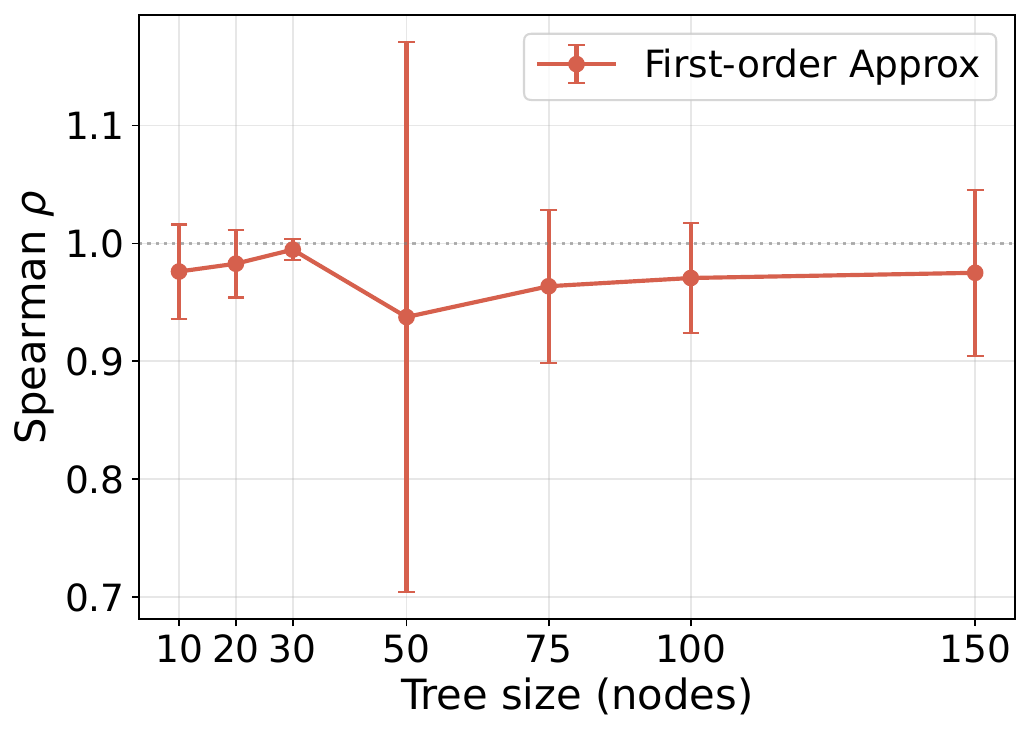}
        \caption{Spearman $\rho$ across tree size.}
        \label{fig:approx_spearman}
    \end{subfigure}
    
    \caption{Efficiency and accuracy of the first-order approximation on the Weibo22 dataset, using \(\lambda_1\) as the target quantity. In Figure~\ref{fig:approx_wall_time}, both axes are shown on a log scale, and the numbers annotated along the curves indicate the corresponding slopes in the log-log plot.}
    \label{fig:approx_three_plots}
\end{figure*}

To evaluate the efficiency and accuracy of the proposed first-order approximation, we compare it with the exact method, which recomputes eigenvalues by eigendecomposition on the Weibo22 dataset. Specifically, we consider a range of target tree sizes, \([10, 20, 30, 50, 75, 100, 150]\), and for each size randomly sample up to 20 trees whose node counts differ by at most 3 from the target. We then apply both methods to these trees to compute our target eigenvalue after per migration and compare them in terms of: (1) practical running time, measured by wall time per migration; and (2) approximation quality, measured by the mean absolute error (MAE) and Spearman's \(\rho\). We report the mean and standard deviation over all sampled trees. We use \(\lambda_1\), which is involved in the max-breadth bound~\ref{prop:max-breadth}, as the target quantity. The results are shown in Figure~\ref{fig:approx_three_plots}.

\noindent \textbf{Efficiency.}
Figure~\ref{fig:approx_wall_time} shows that the first-order approximation is consistently faster than the exact method. On the log-log scale, the approximation grows much more slowly with tree size. For small trees with 10 nodes, it is already about one order of magnitude faster. When the tree size reaches 150, the gap increases to roughly three orders of magnitude. This confirms that the approximation substantially reduces the computational cost of bound-guided optimization in practice.

\noindent \textbf{Accuracy.}
Figures~\ref{fig:approx_mae} and~\ref{fig:approx_spearman} show that the approximation remains accurate across all tree sizes. As the number of nodes increases, the MAE steadily decreases, while the average Spearman's \(\rho\) stays above 0.9. Notably, larger trees also induce substantially more candidate leaf-node migrations, typically growing toward \(O(n^2)\), which makes candidate ranking increasingly difficult. Even under this harder setting, the first-order approximation still preserves the ranking of candidate migrations well. This indicates that it remains reliable for guiding bound-guided optimization, especially on larger graphs.


\section{Experiments Details.}
\label{app:exp_details}
We detail the experimental setup, which includes datasets and baselines.

\subsection{Dataset Details.}
\label{app:dataset_details}
Weibo22~\cite{zhang2025rumor} is a recently collected Sina Weibo rumor detection dataset. The dataset covers events from November 2019 to March 2022, with more than half of the events related to the COVID-19
pandemic. 
Twitter16~\cite{ma2017detect} is a widely used Twitter rumor detection benchmark that contains source tweets and their propagation structures, with events labeled as non-rumors, false rumors, true rumors, or unverified rumors.
During experiments, we only use the graph with more than $3$ nodes and less than $10,000$ nodes. 
Statistics can be seen from Table~\ref{tab:dataset_stats}. 

\begin{table}[!thbp]
\centering
\caption{Statistics of Twitter16 and Weibo22.}
\label{tab:dataset_stats}
\begin{tabular}{lcc}
\toprule
Statistic & Twitter16 & Weibo22 \\
\midrule
\# Graphs & 766 & 2761 \\
\# Non-rumors & 205 & 1247 \\
\# False rumors & 180 & 1514 \\
\# True rumors & 196 & 0 \\
\# Unverified rumors & 185 & 0 \\
Avg \# Nodes & 24.2 & 228.3 \\
Med \# Nodes & 14 & 20 \\
Max \# Nodes & 250 & 9495 \\
\bottomrule
\end{tabular}
\end{table}



\subsection{Classification Experiment Details.}
\label{app:classification_details}
We use logistic regression (LR) as the classifier and apply 5-fold cross-validation to every model. In all experiments, we only use graph structural features, without incorporating any textual features. 
Their respective settings are as follows:
\begin{itemize}
    \item \textbf{Random}: we uniformly sample labels from $(0, 1)$ as prediction;
    \item \textbf{GCN}: we use random features as node features since we only use structure information. We further manually tune hyperparameters and select the best as final parameters. Finally, we set number of layers as $2$, hidden dimension as $64$, node feature dimension as $32$, learning rate as $0.001$, batch size as $32$, and total epoch as $100$;
    \item \textbf{Structural Features}: We use following structural features, and 32 of them are meaningful for trees:
\begin{enumerate}
    \item number of nodes;
    \item number of edges;
    \item depth;
    \item maximum of breadth;
    \item structural virality;
    \item maximum of out degree;
    \item depth of the maximum of out degree node;
    \item width entropy: $\sum_i p(w_i)\log(p(w_i))$, $p(w_i)=\frac{w_i}{\sum_i w_i}$, where $w_i$ is the width of depth $i$;
    \item leaf node ratio;
    \item average of depth;
    \item variance of depth;
    \item average of leaf depth;
    \item mean branching: $\frac{\sum_{i \in I(T)} d_i}{|I(T)|}$, where $|I(T)|$ is the number of internal nodes, $d_i$ is the degree of node $i$;
    \item variance of branching;
    \item sackin index: $S(T) = \sum_{v \in L(T)} d(v)$, i.e., the sum of the depths of all leaf nodes, where $T$ is a tree, $L(T)$ is a leaf node set of $T$, $d(v)$ is the depth of node $v$;
    \item colless index: $C(T) = \sum_{v \in I(T)} |L_v - R_v|$, where $L_v$ and $R_v$ represent number of leaf nodes in $v$'s left subtree and right subtree;
    \item number of internal node;
    \item degree entropy;
    \item degree gini: $G_d = \frac{\sum_{i=1}^{n}\sum_{j=1}^{n} |d_i - d_j|}{2n\sum_{i=1}^{n} d_i}$, which quantifies the inequality of the degree distribution. A larger value indicates a more uneven concentration of node degrees;
    \item diameter;
    \item radius;
    \item mean degree;
    \item maximum of degree;
    \item chromatic number: it is a constant number $2$ for trees;
    \item approximate independent number;
    \item maximum of degrees of a pair of connected nodes;
    \item sum of top 30\% degree;
    \item sum of top 60\% degree;
    \item $e + \binom{t+1}{2}$, where $t=0.3 * \lfloor n \rfloor$, which is an upper bound for sum of part laplacian eigenvalues (check bounds of number of edges in Table~\ref{tab:spectral_bounds});
    \item $e + \binom{t+1}{2}$, where $t=0.6 * \lfloor n \rfloor$, which is an upper bound for sum of part laplacian eigenvalues (check bounds of number of edges in Table~\ref{tab:spectral_bounds});
    \item bandwidth: $\operatorname{bw}(L)=\max_{L_{ij}\neq 0} |i-j| + 1$, where $L$ is the laplacian matrix of graph, which measures how far the nonzero entries of its matrix representation spread away from the main diagonal under a given node ordering;
    \item wiener index: the sum of the shortest-path distances between all unordered pairs of nodes;
    \item number of leaf nodes;
    \item number of spanning tree: its a constant number $1$ for trees;
    
\end{enumerate}
\end{itemize}


\section{More Structural Optimization Results.}
\label{app:more_decode_cases}

\begin{figure*}[!bhtp]
    \includegraphics[width=\linewidth]{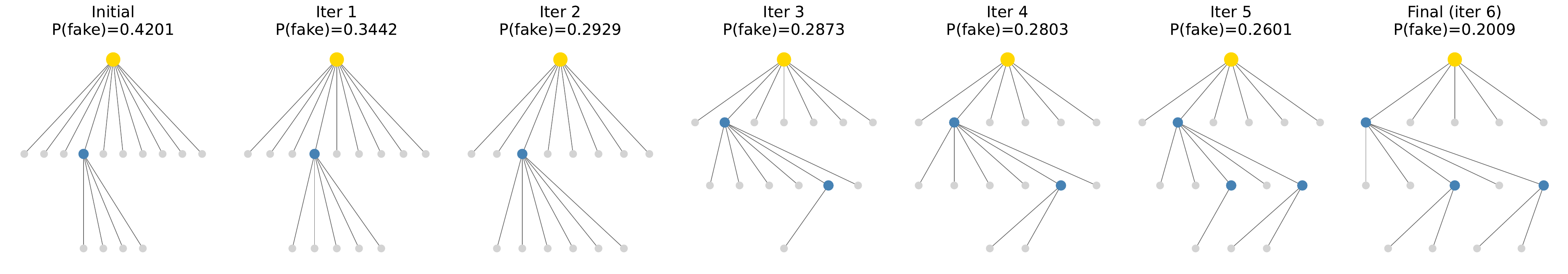}
    \caption{Propagation graph evolution toward a more real news like structure under classifier-guided optimization. Yellow, blue, and gray nodes denote the root, internal, and leaf nodes, respectively. The title of each graph shows the current iteration $t$ and the classifier prediction score.}
    \label{fig:more_real}
\end{figure*}

\begin{figure*}[!bhtp]
    \includegraphics[width=\linewidth]{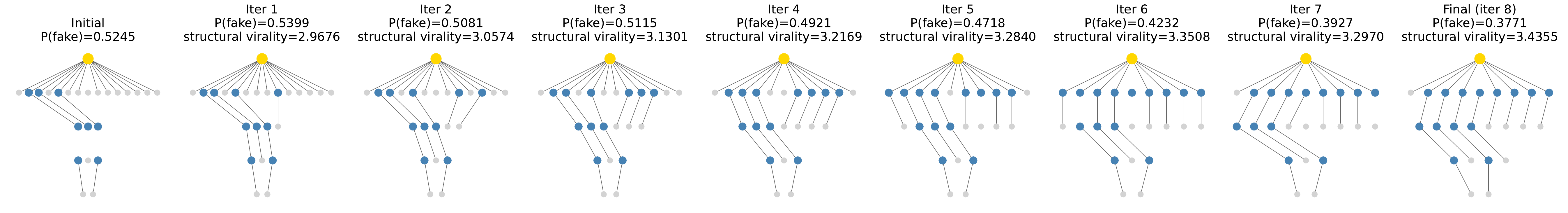}
    \caption{Propagation graph evolution toward higher structural virality under bounds-guided optimization. Yellow, blue, and gray nodes denote the root, internal, and leaf nodes, respectively. The title of each graph shows the current iteration $t$, the classifier prediction score, and bound value.}
    \label{fig:increase_sv}
\end{figure*}

In addition to the two cases presented in the main text, we provide more examples of the proposed discrete structural optimization algorithms to further illustrate the differences between fake and real news dissemination patterns.

\subsection{Evolution Towards More Real.}
We use the same initial tree as in Figure~\ref{fig:more_fake}, but reverse the optimization direction to make the propagation pattern more real-like. The resulting evolution is shown in Figure~\ref{fig:more_real}. By comparing it with the evolution toward a more fake-like structure, we can gain an intuitive understanding of the structural differences between fake and real news propagation.

We observe that, in Figure~\ref{fig:more_real}, the tree becomes more balanced when the blue node in layer 1 is treated as the root. At the same time, its maximum breadth decreases as the structure evolves toward a more real-like pattern.

\subsection{Evolution Towards Higher Structural Virality.}
We use the same initial tree as in Figure~\ref{fig:increase_max_breadth}, but change the target bound to the structural virality bound. The resulting evolution is shown in Figure~\ref{fig:increase_sv}. From the figure, we can see that structural virality increases monotonically. However, this result is not globally optimal, since a path graph should have a larger structural virality value. This suggests that the proposed first-order approximation method may have difficulty reaching the global optimum because it ignores higher-order terms in Eq.~\ref{eq:perturb_L}.

From another perspective, as structural virality increases, the tree tends to evolve toward a more real-like propagation pattern. In particular, the max-breadth decreases and the depth gap becomes smaller. These factors may be important structural cues for fake news detection.


\end{document}